\definecolor{white1}{rgb}{1,1,1}
\newcommand*{\mailto}[1]{\href{mailto:#1}{\nolinkurl{#1}}}
\newcommand{\tr}{\text{\rm{Tr}}}    
\newcommand{\diag}{\text{\rm{diag}}}
\newcommand{\beq}{\begin{equation}}
\newcommand{\eeq}{\end{equation}}
\newcommand{\ba}{\begin{align}}
\newcommand{\ea}{\end{align}}
\renewcommand{\Re}{\text{\rm Re}}
\renewcommand{\Im}{\text{\rm Im}}
 \DeclareMathOperator{\Tr}{Tr}
\newtheorem{theorem}{Theorem}    
\newtheorem{corollary}[theorem]{Corollary}
\theoremstyle{definition}
\begin{document}

\title{RIS-Aided Cooperative ISAC Networks for Structural Health Monitoring}

\author{Jie~Yang,~\IEEEmembership{Member,~IEEE,} \  Chao-Kai~Wen,~\IEEEmembership{Fellow,~IEEE,} \ Xiao~Li,~\IEEEmembership{Member,~IEEE,} and Shi~Jin,~\IEEEmembership{Fellow,~IEEE }
	\thanks{ 
		Jie~Yang and Shi~Jin are also with the Frontiers Science Center for Mobile Information Communication and Security, Southeast University, Nanjing 210096, China (e-mail: yangjie@seu.edu.cn).
		Jie~Yang is with the Key Laboratory of Measurement and Control of Complex Systems of Engineering, Ministry of Education, Southeast University, Nanjing 210096, China.
		Xiao~Li and Shi~Jin are with the National Mobile Communications Research Laboratory, Southeast University, Nanjing 210096, China (e-mail: li\_xiao@seu.edu.cn; jinshi@seu.edu.cn).
		Chao-Kai Wen is with the Institute of Communications Engineering, National Sun Yat-sen University, Kaohsiung 80424, Taiwan. (e-mail: chaokai.wen@mail.nsysu.edu.tw).
		Part of this study has been presented in WOCC 2024 \cite{wocc24}.}
}

\date{\today}
\maketitle

\begin{abstract}

Integrated sensing and communication (ISAC) is a key feature of future cellular systems, enabling applications such as intruder detection, monitoring, and tracking using the same infrastructure. However, its potential for structural health monitoring (SHM), which requires the detection of slow and subtle structural changes, remains largely unexplored due to challenges such as multipath interference and the need for ultra-high sensing precision. This study introduces a novel theoretical framework for SHM via ISAC by leveraging reconfigurable intelligent surfaces (RIS) as reference points in collaboration with base stations and users. By dynamically adjusting RIS phases to generate distinct radio signals that suppress background multipath interference, measurement accuracy at these reference points is enhanced. We theoretically analyze RIS-aided collaborative sensing in three-dimensional cellular networks using Fisher information theory, demonstrating how increasing observation time, incorporating additional receivers (even with self-positioning errors), optimizing RIS phases, and refining collaborative node selection can reduce the position error bound to meet SHM’s stringent accuracy requirements. Furthermore, we develop a Bayesian inference model to identify structural states and validate damage detection probabilities. Both theoretical and numerical analyses confirm ISAC's capability for millimeter-level deformation detection, highlighting its potential for high-precision SHM applications.

\begin{IEEEkeywords}
	 Integrated sensing and communication, structural health monitoring, reconfigurable intelligent surfaces, spatiotemporal cooperation, multi-user cooperation.
\end{IEEEkeywords}

\end{abstract}

\vspace{-0.5cm}
\section{Introduction}

As urbanization progresses, structural health monitoring (SHM) becomes increasingly critical. Changes in building geometry may compromise structural integrity, leading to cracks and potential collapses that endanger lives and property if left undetected. SHM employs sensors to continuously monitor building structures and track capacity changes due to aging or accidental damage \cite{SHM}. Traditional geodetic methods, including global navigation satellite systems (GNSS), laser scanning, depth cameras, and radar interferometry, play significant roles in SHM \cite{SAR}. However, these methods require frequent observations that are both time-consuming and labor-intensive. Moreover, GNSS is primarily limited to measuring top displacements. To overcome these limitations, sophisticated sensors such as accelerometers, hydrostatic levels, and inclinometers are deployed to detect displacements at multiple points. Nevertheless, challenges such as error accumulation, the absence of a standardized external coordinate system, and difficulties in efficient information fusion persist \cite{ZHU2025102614}.

Integrated sensing and communication (ISAC), which leverages the inherent functions of cellular systems for sensing, has recently emerged \cite{ISAC}, \cite{TenLu} and is recognized as one of the six typical application scenarios for the sixth-generation (6G) mobile communication \cite{6Ghenk}. 
The third generation partnership project (3GPP) discusses sensing services (e.g., autonomous/assisted driving, vehicle-to-everything, unmanned aerial vehicles, three-dimensional (3D) map reconstruction, smart city, smart home, factories, healthcare, and the maritime sector) and related key performance indicators \cite{3GPP}, \cite{3GPP22837_j40}.
Distributed intelligent ISAC has been proposed to expand collaborative capabilities \cite{DISAC}, and networked ISAC schemes involving multiple transceivers have been shown to enhance ISAC services \cite{CISAC}, \cite{COWei}, \cite{DCOWei}, \cite{COMoe}. In addition, cellular systems can promptly broadcast warning messages upon detecting hazards. These advances motivate us to exploit the dense, distributed, and collaborative characteristics of cellular systems for SHM, one of the key applications of interest for the 3GPP.

Unlike typical ISAC applications such as user localization and environmental reconstruction \cite{SLAM}, \cite{DISAChenk}, SHM imposes relatively low real-time requirements on sensing, enabling operation during idle periods of communication nodes. However, SHM primarily targets low-frequency, small-amplitude mechanical vibrations or deflections (on the order of centimeters or even millimeters) \cite{GNSS}. For instance, as illustrated in Fig. \ref{scenario}(b)-(g), structural deformations occur in various forms; according to {ISO/TR 4553} \cite{ISO4553_2022}, steel structures permit vertical deflections of 1/500 to 1/150 of their span, single-story buildings allow horizontal displacements up to 1/150 of their height, and the measured vibration frequency must differ from the natural frequency by at least 30\%. These requirements pose two significant challenges for current ISAC systems: \textit{clutter and interference suppression} and \textit{ultra-high sensing precision}.

This paper proposes a reconfigurable intelligent surface (RIS)-aided, cellular-based SHM framework to address these challenges. RIS is introduced to mitigate the impact of clutter and interference on sensing performance. As an emerging technology, RIS can rapidly adjust its phase to tailor the electromagnetic propagation environment, offering a promising software-defined architecture for smart radio environments with reduced cost, size, weight, and power consumption \cite{RISMarco}, \cite{RIS}, \cite{RISliu}. RIS can enhance the energy of selected propagation paths through phase optimization and distinguish its signal from background interference. While current research primarily focuses on optimizing RIS phases for improved communication and sensing coverage \cite{IRSaided}, \cite{RISaided}, new modalities such as active RIS \cite{ARIS}, hybrid RIS \cite{HRIS}, and STAR RIS \cite{starRIS} have emerged. 
The work in \cite{RISLochenk} addresses the problem of the joint 3D localization of a hybrid RIS and a user.
By exploiting the rapidly changing characteristics of RIS, its path can be differentiated from environmental cluster paths and interference, thereby reducing their adverse impact on sensing performance. Moreover, RIS can be seamlessly integrated into building structures, such as walls, billboards, and windows, without compromising aesthetics \cite{cui2017information}. Despite these advantages, the small perturbation introduced by RIS makes it challenging to distinguish from background interference, necessitating further design of RIS phase control. To the best of our knowledge, neither the system model nor the theoretical performance of cellular-based SHM with RIS assistance has been investigated.

Cellular systems inherently provide dense networks that can serve as sensing networks, potentially achieving the ultra-high sensing precision required for SHM through multi-node spatiotemporal cooperation without additional device installations \cite{FIM2}. Moreover, cellular systems offer a unified external reference, with base stations serving as stable anchor points \cite{FIM1}. 
Theoretical analysis frameworks for cooperative localization have been an ongoing research focus.
Cooperative localization approaches and their applications in ultrawideband wireless networks have been reviewed in \cite{NLHenk}. A general framework to determine the fundamental limits of localization accuracy in wideband wireless networks was developed in \cite{EFIM} and later extended to cooperative location-aware networks \cite{NEFIM}. Furthermore, a theoretical foundation based on the equivalent Fisher information matrix (EFIM) for network localization and navigation, including spatiotemporal cooperation, array signal processing, and map exploitation, was presented in \cite{ELIP} to guide the design and operation of practical localization systems. However, these frameworks do not consider the collaboration of cellular networks, as they omit cooperation between base stations and between base stations and users, and do not account for real-world 3D scenarios or the impact of RIS.

  \begin{figure} 
 	\center
 	\includegraphics[width=0.49\textwidth]{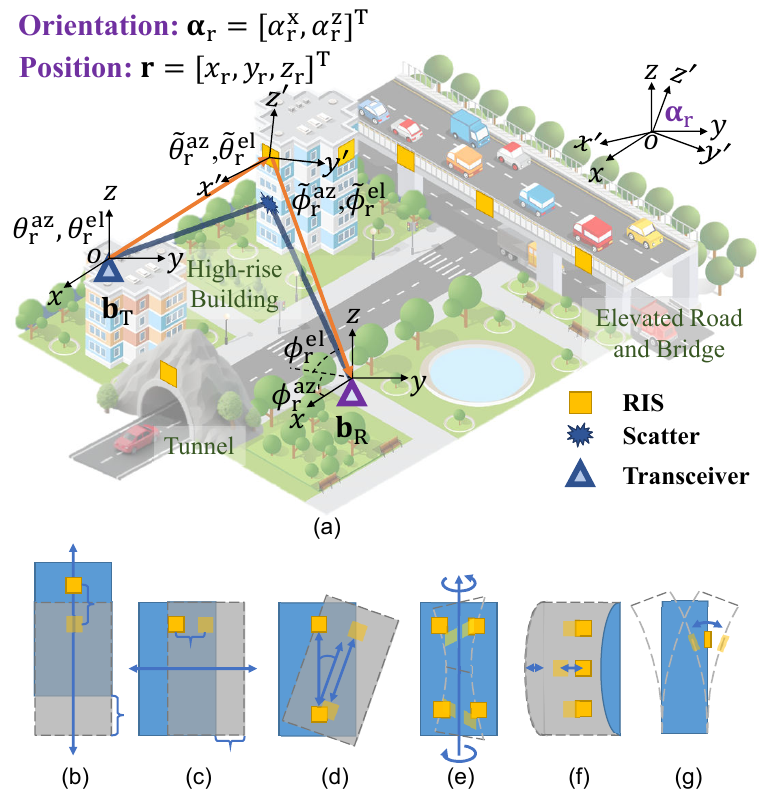} 
 	\caption{(a) 3D scenario of cellular-based SHM with RIS assistance; (b)-(g) Structural deformation illustrations: (b) vertical displacement, (c) horizontal displacement, (d) roll, (e) torsion, (f) bending, and (g) vibration.} 
 	\label{scenario}
 \end{figure}

In this study, RIS are used as reference points for monitoring building structures. By analyzing positional changes of these reference points, the structural state can be determined. 
The cellular-based SHM process comprises three steps: measurement acquisition, localization analysis, and structure state identification.
First, we optimize the RIS phase configuration to obtain measurements and eliminate background noise interference. Then, we analyze the influence of cellular system parameters on reference point localization using Fisher information theory. Additionally, we examine the impact on overall system performance when nodes with self-positioning errors (e.g., mobile users) participate in the collaboration.
Finally, we develop an inference mechanism to analyze the detection probability of structural state identification.
The main contributions of this work are summarized as follows: 
\begin{itemize}  

\item We propose an SHM method that uses RIS as reference points to suppress multipath interference. An analytical framework for structural deformation detection is established using Fisher information and hypothesis testing theory, which accounts for RIS position estimation errors. Both theoretical and numerical analyses confirm the effectiveness of ISAC in detecting millimeter-level deformations.

\item We develop an analytical framework for RIS-aided cellular network collaborative sensing in 3D scenarios based on Fisher information theory. Our analysis reveals how increased observation time, additional receivers (even with self-positioning errors), optimized RIS phases, and refined base station–user collaborative node selection affect the position error bound (PEB), thereby meeting the stringent accuracy requirements of SHM.

\end{itemize}

The remainder of the paper is organized as follows. Section~\ref{sec-system-model} introduces the system model and problem formulation. Section~\ref{sec-Information-Ellipsoid-for-SHM} presents the theoretical analysis framework for SHM. Section~\ref{sec-Reference-Point-Localization} analyzes the sensing accuracy improvements enabled by spatiotemporal collaboration in ISAC networks. Section~\ref{sec-SIM} presents extensive numerical results, and Section~\ref{sec-conclusion} concludes the paper.

\textbf{Notations}: The transpose, conjugate transpose, and inverse
matrices are denoted by $(\ast)^{\rm T}$, $(\ast)^{\rm H}$, and $(\ast)^{-1}$,
respectively. $[\mathbf{a}]_{n}$ represents the $n$-th element of vector $\mathbf{a}$.
The function $\tr(\ast)$ denotes the trace of a matrix,  
$\diag(\ast)$ creates a diagonal matrix, and $\mathbb{E}\{\ast\}$ denotes the statistical expectation.
The operators $\Re\{\ast\}$ and $\Im\{\ast\}$ extract the real and imaginary parts, respectively. The expression $\mathbf{A}\succcurlyeq\mathbf{B}$ indicates that $\mathbf{A}-\mathbf{B}$ is positive semidefinite. The notation 
$|\ast|$ represents the modulus operation for a scalar, and $\| \ast \|_2$ represents the 2-norm of a vector. Finally, $\partial$ denotes the partial derivative.

\section{System Model and Problem Formulation}\label{sec-system-model}
In this section, we present the 3D wireless communication scenario involving a RIS mounted on a potentially deformable structure and formulate the structural deformation detection problem.

\subsection{System Model} 
We consider a 3D wireless communication scenario, as illustrated in Fig. \ref{scenario}(a). A transmitter is equipped with a uniform planar array (UPA) consisting of
$N_{\rm T}=N_{\rm T}^{\rm az} \times N_{\rm T}^{\rm el}$ antenna elements. The central position of the transmitter array is denoted by $\mathbf{b}_{\rm T} \in \mathbb{R}^{3 \times 1}$. Let $\mathbf{p}_{{\rm T}, n} \in \mathbb{R}^{3 \times 1}$ represent the relative position vector of the $n$-th transmit antenna element, with the origin at the array center.
Similarly, a receiver is equipped with a UPA comprising $N_{\rm R}=N_{\rm R}^{\rm az} \times N_{\rm R}^{\rm el}$ antenna elements, and its central position is represented by $\mathbf{b}_{\rm R} \in \mathbb{R}^{3 \times 1}$. 
Let $\mathbf{p}_{{\rm R}, n} \in \mathbb{R}^{3 \times 1}$ denote the relative position vector of the $n$-th receive antenna element. The positions of both the transmitter and receiver are fixed and known.
An RIS, comprising $M$ reflecting elements, is deployed on a building structure, such as a bridge, an elevated road, a high-rise building, or a tunnel. For illustration, we consider a single RIS whose center is given by $\mathbf{r}=[x_{\rm r},y_{\rm r},z_{\rm r}]^{\rm T}$, and whose orientation is characterized by $\boldsymbol{\alpha}_{\rm r}=[ \alpha_{\rm r}^{\rm x},\alpha_{\rm r}^{\rm z}]^{\rm T}$.
Let $\mathbf{p}_{{\rm r}, m} \in \mathbb{R}^{3 \times 1}$ denote the relative position vector of the $m$-th RIS element, with the origin at the array center. We assume that the initial placement and orientation of the RIS is known and denoted by $\mathbf{r}_0$ and $\boldsymbol{\alpha}_{\rm r,0}$, respectively. Note that the UPAs and the RIS elements are arranged along the $y$-$z$ plane. If structural deformation occurs, the RIS position and orientation may change (e.g., rotating to the $y'$-$z'$ plane).

At the $i$-th observation instant, the channel can be written as
\begin{equation} \label{h}
\mathbf{H}^{[i]} = \mathbf{H}_{\rm r}^{[i]}  +  \mathbf{H}_{\rm s}^{[i]}   ,
\end{equation}
where $\mathbf{H}_{\rm r}^{[i]} $ denotes the channel contributions from the RIS, and $\mathbf{H}_{\rm s}^{[i]} $ represents the contribution from scatterers.  Specifically, the RIS channel component is defined as 
\begin{multline} \label{hr} 
	\mathbf{H}_{\rm r}^{[i]}  = g_{\rm r}e^{-j 2\pi f \tau_{r}}\mathbf{a}_{\rm R}( \phi_{\rm r}^{\rm az} ,\phi_{\rm r}^{\rm el} )\mathbf{a}_{\rm r}^{\rm H}( \tilde{\phi}_{\rm r}^{\rm az} ,\tilde{\phi}_{\rm r}^{\rm el} ) \mathbf{\Omega}^{[i]}
	 \\
	\times  \mathbf{a}_{\rm r}( \tilde{\theta}_{\rm r}^{\rm az} ,\tilde{\theta}_{\rm r}^{\rm el} )
	\mathbf{a}_{\rm T}^{\rm H}( \theta_{\rm r}^{\rm az} ,\theta_{\rm r}^{\rm el} ),
\end{multline}
where $g_{\rm r}\in \mathbb{C}$ is the channel coefficient, $\tau_{r}$ is the time of arrival (TOA), and $f$ is the carrier frequency. Here, $\phi_{{\rm r}}^{\rm az}$ and $\phi_{{\rm r}}^{\rm el}$ denote the azimuth and elevation angles of arrival (AOA) at the receiver array, respectively, while $\theta_{{\rm r}}^{\rm az}$ and $\theta_{{\rm r}}^{\rm el}$ denote the azimuth and elevation angles of departure (AOD) from the transmitter array. The corresponding AOD and AOA at the RIS are given by
$\tilde{\phi}_{{\rm r}}^{\rm az}$, $\tilde{\phi}_{{\rm r}}^{\rm el}$, $\tilde{\theta}_{{\rm r}}^{\rm az}$, and $\tilde{\theta}_{{\rm r}}^{\rm el}$, respectively.

To characterize spatial selectivity, each array (Tx, Rx, RIS) is associated with a steering vector $\mathbf{a}_\ast(\cdot)$ ($\ast \in \{ {\rm R}, {\rm T}, {\rm r}\}$), which depends on azimuth and elevation angles. For instance, if the wave vector impinging on the receiver has azimuth angle $\phi_{{\rm r}}^{\rm az}$ and elevation angle $\phi_{{\rm r}}^{\rm el}$, then
\begin{equation} \label{wv}
	 \mathbf{k}(\phi_{{\rm r}}^{\rm az}, \phi_{{\rm r}}^{\rm el}) = \dfrac{2\pi }{\lambda}{\left[\cos\phi_{{\rm r}}^{\rm az} \cos\phi_{{\rm r}}^{\rm el}, \sin\phi_{{\rm r}}^{\rm az}\cos\phi_{{\rm r}}^{\rm el}, \sin\phi_{{\rm r}}^{\rm el}  \right]}^{\rm T},
\end{equation}  
where $\lambda = f/c$ is the wavelength, and $c$ is the speed of light. The $n$-th element  of the receive steering vector is then given by
\begin{equation} \label{sv}
\left[\mathbf{a}_{\rm R}( \phi_{\rm r}^{\rm az} ,\phi_{\rm r}^{\rm el} )\right]_{n} = e^{-j \mathbf{p}^{\rm T}_{{\rm R},n} \mathbf{k}(\phi_{{\rm r}}^{\rm az}, \phi_{{\rm r}}^{\rm el})}, ~n = 1, \ldots, N_{\rm R}.
\end{equation} 
Analogous expressions hold for the transmitter and RIS. 
These steering vectors are left unnormalized to preserve the dependence on the number of antennas for subsequent localization analysis. Additional details on geometric and channel parameters are provided in Appendix~\ref{A0}.

In \eqref{hr}, the matrix $\mathbf{\Omega}^{[i]} $ denotes the RIS phase-shift configuration at time $i$.
Following \cite{RIS}, we approximate $\mathbf{\Omega}^{[i]}$ by an $M \times M$ diagonal matrix:
\begin{equation} \label{Omega}
	\mathbf{\Omega}^{[i]}  = \diag{\left(e^{j\omega^{[i]} _{1}},\ldots,e^{j\omega^{[i]} _{M}} \right)},
\end{equation}
where $\omega^{[i]} _{m}$ represents the phase of the $m$-th RIS element at the $i$-th observation.
For two closely spaced observation instants, ${i-1}$ and $i$, it is reasonable to assume that the surrounding scattering channel remains essentially unchanged:
\begin{equation} \label{H_i-1=H_i}
\mathbf{H}_{\rm s}^{[i]} = \mathbf{H}_{\rm s}^{[i-1]}. 
\end{equation}
Therefore, if the phases of the RIS elements vary over time, the RIS can serve as a dynamic reference point within the radio propagation environment.

From \eqref{h}, the received signal at time $i$ is given by
\begin{equation} \label{y}
	\mathbf{s}^{[i]} = \left(\mathbf{H}_{\rm r}^{[i]}  +  \mathbf{H}_{\rm s}^{[i]} \right) \mathbf{f} \ x +\mathbf{w}^{[i]} ,
\end{equation} 
where $\mathbf{f}$ represents the beamforming vector used for sensing, and $x$ denotes the transmitted pilot signal. We assume that the pilot has a constant envelope, i.e., $ |x|^2 = P_{\rm T} $, where $P_{\rm T}$ is the transmit power.
The vector $\mathbf{w}^{[i]} $ represents additive white Gaussian noise,  with each element distributed as $\mathcal{N}(0,\sigma^2/2)$.
To isolate the information from the dynamic reference point, i.e., the RIS, we subtract the received signals at adjacent observation instants, thereby eliminating the influence of the background propagation environment. Defining 
\begin{equation}
\mathbf{s}_{i,{i-1}} = \mathbf{s}^{[i]} - \mathbf{s}^{[i-1]},
\end{equation}
and 
\begin{equation}
\mathbf{w}_{i,i-1} = \mathbf{w}^{[i]} - \mathbf{w}^{[i-1]},
\end{equation}
and applying \eqref{H_i-1=H_i}, we obtain  
\begin{equation} \label{y-s}
\mathbf{s}_{i,{i-1}}  = \big(\mathbf{H}_{\rm r}^{[i]}-\mathbf{H}_{\rm r}^{[i-1]}\big)\mathbf{f}\ x +\mathbf{w}_{i,i-1},
\end{equation}
where the elements of $\mathbf{w}_{i,i-1}$ follow the normal distribution $\mathcal{N}(0, \sigma^2)$.
For SHM, our goal is to estimate the RIS position from $\mathbf{s}_{i,{i-1}}$ and thereby detect whether any structural deformation has occurred. Note that in this study, we consider far-field scenarios, where the RIS is treated as a point target for the sake of simplification.

\subsection{Problem Formulation}
Because structural deformation detection critically depends on precise RIS localization, we employ a Fisher information-based analysis. Let $\mathbf{y}_{i,{i-1}}$ be the noiseless counterpart of $\mathbf{s}_{i,{i-1}}$. The Fisher information matrix (FIM) for both the channel parameters and the RIS state is 
\begin{equation} \label{FIM}  
	\mathbf{F}(\boldsymbol{*})=\frac{2}{\sigma^2}\Re\left\{ \frac{\partial \mathbf{y}^{\rm T} _{i,{i-1}}}{\partial \boldsymbol{*}}\Big(\frac{\partial \mathbf{y}^{\rm T}_{i,{i-1}}}{\partial \boldsymbol{*}}\Big)^{\rm H}\right\},
\end{equation}
where $\boldsymbol{*} = \{\boldsymbol{\eta}, \mathbf{r}\}$. The vector of channel parameters $\boldsymbol{\eta}$ is defined as
\begin{equation}
    \label{er}
 \boldsymbol{\eta} = \big[\tau_{\rm r}, \phi_{\rm r}^{\rm az}, \phi_{\rm r}^{\rm el}, \tilde{\phi}_{\rm r}^{\rm az}, \tilde{\phi}_{\rm r}^{\rm el}, \tilde{\theta}_{\rm r}^{\rm az}, \tilde{\theta}_{\rm r}^{\rm el}, \theta_{\rm r}^{\rm az}, \theta_{\rm r}^{\rm el}, 
 \Re\{g_{\rm r}\}, \Im\{g_{\rm r}\}\big]^{\rm T}.
\end{equation}  
Given the relationship
\begin{equation} \label{FIM3}
	\frac{\partial \mathbf{y}^{\rm T}_{i,{i-1}}}{\partial \mathbf{r}}=\frac{\partial\boldsymbol{\eta}^{\rm T}}{\partial \mathbf{r}}\frac{\partial \mathbf{y}^{\rm T}_{i,{i-1}}}{\partial \boldsymbol{\eta}},
\end{equation}
we can establish the relationship between the FIM of the RIS state and the FIM of the channel parameters as
\begin{equation} \label{FIM5}
	\mathbf{F}(\mathbf{r})=\mathbf{J} \mathbf{F}(\boldsymbol{\eta})  \mathbf{J}^{\rm H},
\end{equation}
where $\mathbf{J} = {\partial\boldsymbol{\eta}^{\rm T}}/{\partial \mathbf{r}}$ is the Jacobian matrix. 
For any unbiased estimator $\hat{\mathbf{r}}$ of $\mathbf{r}$, the estimation error is bounded by the inverse of the FIM:
\begin{equation} \label{FIM6}
\mathbb{E}\{(\hat{\mathbf{r}}-\mathbf{r})(\hat{\mathbf{r}}-\mathbf{r})^{\rm T}  \} \succcurlyeq \mathbf{F}^{-1}(\mathbf{r}).
\end{equation}  
Consequently, the Cram\'{e}r-Rao lower bound (CRLB) is given by
\begin{equation} \label{CRLB}
	\mbox{CRLB}(\mathbf{r}) =  \Tr\big(\mathbf{F}^{-1}(\mathbf{r}) \big).
\end{equation}
Using the CRLB, we define the squared PEB for the RIS as
\begin{align}\label{SPEB}
	\sigma^2_{\rm p}(\mathbf{r})  &=   \mbox{CRLB}(\mathbf{r})  .
\end{align}

The main objective of this work is to establish a Fisher information-based theoretical framework for structural deformation analysis in cellular networks. Specifically, we aim to localize a dynamic reference point (i.e., the RIS) from received signals and then ascertain whether the structure has deformed based on the RIS’s estimated position. Unlike conventional fixed reference points, such as corner reflectors, the RIS can alter its phase-shift configuration, potentially improving or degrading estimation performance depending on the system design. Two central issues are addressed:
\begin{itemize}  
    \item \textbf{Theoretical Framework for SHM in Cellular Systems with Reference Point Estimation Errors:} 
    Existing research lacks a unified theoretical framework for SHM in cellular systems, providing limited guidance on configuring network parameters, designing RIS phases, and selecting collaborative nodes. These factors are crucial for enhancing SHM accuracy and efficiency. A Fisher information-based approach offers the necessary foundation for jointly optimizing these parameters and assessing the impact of beamforming or RIS-phase mismatches (caused by RIS movements) on monitoring capabilities.
 
    \item \textbf{Spatiotemporal  Collaborative Sensing in 3D Cellular Networks:}  
    Current work has not yet formalized a collaborative sensing analysis for fully 3D scenarios or accounted for the effect of RIS phase control on temporal collaboration. Moreover, multi-node spatial collaboration in cellular networks (involving both base stations and user equipment) often neglects the self-positioning errors introduced by user mobility. The impact of such errors on collaborative sensing performance remains unclear. 
\end{itemize}

\section{Theoretical Analysis Framework for SHM}\label{sec-Information-Ellipsoid-for-SHM} 
\subsection{FIM of Channel Parameters}\label{ris-conf} 
In this subsection, we derive the FIM $\mathbf{F}(\boldsymbol{\eta}) $ for the channel parameter $\boldsymbol{\eta}$ in closed form.  Following \cite{FIM2}, \cite{FIM1}, $\mathbf{F}(\boldsymbol{\eta}) $ can be approximated as a block diagonal matrix. 
Let the scalar $F_{uv} $ denote the element in the matrix $\mathbf{F}(\boldsymbol{\eta}) \in \mathbb{R}^{11 \times 11} $, we have
\begin{align} \label{fim-e}  
F_{uv}  &= \frac{2 }{\sigma^2 }\Re \left\{\Big(\frac{\partial \mathbf{y}^{\rm T}_{i,{i-1}}}{\partial u}\Big) \Big(\frac{\partial \mathbf{y}^{\rm T}_{i,{i-1}}}{\partial v}\Big)^{\rm H}\right\} \notag \\
 &= \frac{2 }{\sigma^2 }\Re  \Big\{
\underbrace{\mu_{uv}  }_{{\rm RX\ factor}} \times \underbrace{ \beta_{uv} }_{ {\rm RIS\ factor}  }\times \underbrace{ \gamma_{uv}  }_{ {\rm TX\ factor}  }
 \Big\},
\end{align}  
where $u$ and $v$ are elements of the vector $\boldsymbol{\eta}$ defined in \eqref{er}.
Based on geometric relationships (see Appendix \ref{A0}), the RX factor ($\mu_{uv}$) involves TOA and AOA parameters at the receiver array,  which depend on the RIS position but not on its orientation. The RIS factor ($\beta_{uv}$) incorporates the observed AOA and AOD at the RIS, which depend on both the RIS position and orientation. Hence, estimating the RIS orientation would require sensing-capable RIS elements \cite{yang2024intelligent}, which lies outside the scope of this study. Finally, the TX factor ($\gamma_{uv}$) involves the AOD parameter from the transmitter array, which depends on the RIS position but not its orientation.


We analyze the RX factor by considering the TOA ($\tau_{\rm r}$) and AOA ($\phi_{\rm r}^{\rm az}, \phi_{\rm r}^{\rm el}$) channel parameters at the receiver array. Notably, the RIS factor ($\beta_{uv}$) and Tx factor ($\gamma_{uv}$) are independent of these parameters and thus remain constant with respect to $\tau_{\rm r}$, $\phi_{\rm r}^{\rm az}$, and $\phi_{\rm r}^{\rm el}$. Consequently, we have
\begin{subequations} \label{beta_gamma_**} 
	\begin{align}
		\beta_{**} & = 	\mathbf{a}_{\rm r}^{\rm H}( \tilde{\theta}_{\rm r}^{\rm az} ,\tilde{\theta}_{\rm r}^{\rm el} )
		\big(\mathbf{\Omega}^{[i]}-\mathbf{\Omega}^{[i-1]}\big)^{\rm H} \mathbf{a}_{\rm r}( \tilde{\phi}_{\rm r}^{\rm az} ,\tilde{\phi}_{\rm r}^{\rm el} )\\ \nonumber
		&~~\times
		\mathbf{a}_{\rm r}^{\rm H}( \tilde{\phi}_{\rm r}^{\rm az} ,\tilde{\phi}_{\rm r}^{\rm el} )
		\big(\mathbf{\Omega}^{[i]}-\mathbf{\Omega}^{[i-1]}\big)
		\mathbf{a}_{\rm r}( \tilde{\theta}_{\rm r}^{\rm az} ,\tilde{\theta}_{\rm r}^{\rm el} ),\\
		\gamma_{**} &=   \mathbf{f}^{\rm H}\mathbf{a}_{\rm T}( \theta_{\rm r}^{\rm az} ,\theta_{\rm r}^{\rm el} ) \mathbf{a}^{\rm H}_{\rm T}( \theta_{\rm r}^{\rm az} ,\theta_{\rm r}^{\rm el} )\mathbf{f} \, |x|^2,
	\end{align}
\end{subequations}  
for $* \in \{\tau_{\rm r}, \phi_{\rm r}^{\rm az}, \phi_{\rm r}^{\rm el} \}$. We then define the per-antenna received signal-to-noise ratio (SNR) as  
\begin{equation} \label{fim-snr}
	{\rm SNR} = \frac{|g_r|^2 \beta_{**} \gamma_{**}}{\sigma^2}.
\end{equation} 

By appropriately designing the RIS phase shifts $\mathbf{\Omega}^{[i]}$ and the beamforming vector $\mathbf{f}$, we can maximize $\beta_{**}$ and $\gamma_{**}$, respectively. Given the known initial positions of both the transceiver and the RIS, each RIS element’s phase can be configured to align with the incoming signal phase and to steer the reflected signal in the desired direction. Let $\varpi^{[i]}_m$ and $\varpi^{[i-1]}_m$ denote the phase of the $m$-th RIS element at time $i$ and ${i-1}$, respectively. Furthermore, let $\bar{\phi}_m$ and $\bar{\theta}_m$ be the phases of the $m$-th element in $\mathbf{a}_{\rm r}( \tilde{\phi}_{\rm r}^{\rm az} ,\tilde{\phi}_{\rm r}^{\rm el} )$ and $\mathbf{a}_{\rm r}( \tilde{\theta}_{\rm r}^{\rm az} ,\tilde{\theta}_{\rm r}^{\rm el} )$, respectively. As shown in Fig. \ref{fig2RIS}, the optimal phases then satisfy 
\begin{equation}
\varpi^{[i]}_m = -(\bar{\phi}_m - \bar{\theta}_m) \text{ and } \varpi^{[i-1]}_m = \pi-(\bar{\phi}_m - \bar{\theta}_m).
\end{equation}
Therefore, the $m$-th diagonal element of the matrix ${\big(\mathbf{\Omega}^{[i]}-\mathbf{\Omega}^{[i-1]}\big)}$ equals to $2e^{-j(\bar{\phi}_m - \bar{\theta}_m)}$. This indicates that the composite phase of the RIS at adjacent time instants is aligned with the transmitter and receiver.
Denote the corresponding RIS configurations at times $i-1$ and $i$ by $\mathbf{\Omega}_{-}$ and $\mathbf{\Omega}_{+}$, respectively.
Then, if the RIS position remains unchanged across these instants, we have $\beta_{**}=4M^4$.
To maintain flexible and optimal control, the RIS alternates its configuration between $\mathbf{\Omega}_{-}$ and $\mathbf{\Omega}_{+}$ at different observation times. Meanwhile, the beamforming vector is designed as ${\mathbf{f}=\mathbf{a}_{\rm T}^{\rm H}( \theta_{\rm r}^{\rm az} ,\theta_{\rm r}^{\rm el} ) }$, thereby ensuring that the transmitted signal is directed toward the RIS. Consequently, if the RIS position remains unchanged, we have $\gamma_{**} = N_{\rm T}^2P_{\rm T}$.  

\begin{figure} 
	\center
	\includegraphics[width=0.42\textwidth]{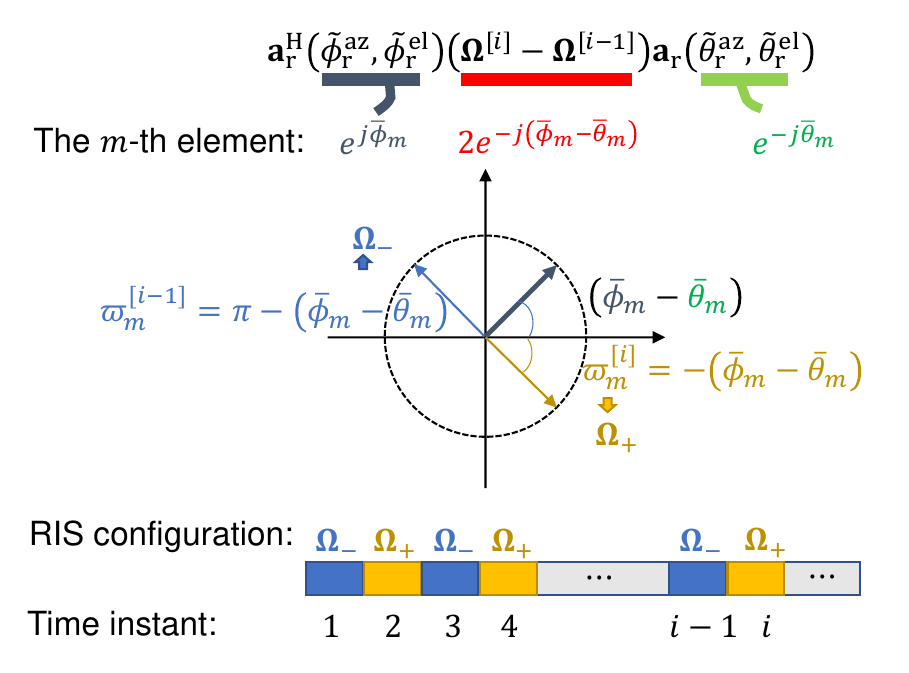} 
	\caption{Illustration of the RIS phase configuration.}
	\label{fig2RIS}
\end{figure}

For the TOA parameter $\tau_{\rm r}$,\footnote{As shown in Fig. \ref{ellips}(a), the delay consists of two parts: the delay from the transmitter to the RIS (denoted as $\tau_{\rm tr}$) and the delay from the RIS to the receiver. A single receiver cannot decouple the two segments of the delay. By using multiple receivers to solve the equations, $\tau_{\rm tr}$ can be estimated, enabling RIS localization.}  we have 
\begin{equation}\label{fim-toa}
\mu_{\tau_{\rm r}\tau_{\rm r}}   = 4\pi^2 f^2 |g_r|^2 N_{\rm R}.
\end{equation}
According to \eqref{fim-e}, the corresponding element in the FIM is
\begin{equation} \label{fim-e1}
F_{\tau_{\rm r} \tau_{\rm r}} = \frac{2 }{\sigma^2 }\Re  \{\mu_{\tau_{\rm r}\tau_{\rm r}} \times \beta_{\tau_{\rm r}\tau_{\rm r}} \times \gamma_{\tau_{\rm r}\tau_{\rm r}} \}.
\end{equation}
Combining \eqref{beta_gamma_**} and \eqref{fim-toa} with \eqref{fim-e1} and simplifying yield 
\begin{equation} \label{fim-e22}
	F_{\tau_{\rm r} \tau_{\rm r}} =8 \pi^2 f^2  N_{\rm R}{\rm SNR},
\end{equation}
indicating that $F_{\tau_{\rm r} \tau_{\rm r}}$ is influenced by the carrier frequency and SNR. 
When considering an orthogonal frequency division multiplexing (OFDM) signal \cite{SLAM}, with $N_{\rm c}$ subcarriers and subcarrier spacing $\Delta f$, and assuming energy normalization across subcarriers, we obtain 
\begin{equation} \label{fim-tt2}
	\mu_{\tau_{\rm r}\tau_{\rm r}}  \approx \frac{ 4\pi^2 (N_{\rm c} \Delta f) ^2 |g_r|^2 N_{\rm R} }{12}.
\end{equation} 
Substituting \eqref{fim-tt2} into \eqref{fim-e1}, we get 
\begin{equation} \label{fim-e2}
	F_{\tau_{\rm r} \tau_{\rm r}} \approx \frac{2 \pi^2  }{3}B^2  N_{\rm R}{\rm SNR},
\end{equation}
where $B=N_{\rm c} \Delta f $ represents the effective bandwidth. 

For the azimuth AOA $\phi_{\rm r}^{\rm az}$, we have
\begin{equation}\label{fim-az2}
\mu_{\phi_{\rm r}^{\rm az}\phi_{\rm r}^{\rm az}}\! \!=\!\! \frac{2\pi^2 d^2 |g_r|^2\!}{3\lambda^2}\!\cos^2\! \!\phi_{\rm r}^{\rm az} \! \cos^2\!\! \phi_{\rm r}^{\rm el} N_{\rm R} ({N_{\rm R}^{\rm az}\!-\!1})({2N_{\rm R}^{\rm az}\!-\!1}),
\end{equation}
leading to 
\begin{equation} \label{fim-az}
	F_{\phi_{\rm r}^{\rm az}\phi_{\rm r}^{\rm az}} =  \frac{ 8 \pi^2 }{3 \lambda^2} G_{\phi_{\rm r}^{\rm az}\phi_{\rm r}^{\rm az}} N_{\rm R}{\rm SNR},
\end{equation}
where  $G_{\phi_{\rm r}^{\rm az}\phi_{\rm r}^{\rm az}} \approx (N^{\rm az}_{\rm R} d \cos\phi_{\rm r}^{\rm az} \cos\phi_{\rm r}^{\rm el} )^2$ represents the effective array aperture, and $d$ is the antenna spacing. Notably, when $d = \lambda/2$, the term $\lambda^2$ cancels out, making $F_{\phi_{\rm r}^{\rm az}\phi_{\rm r}^{\rm az}}$ independent of $\lambda$.

For the elevation AOA $\phi_{\rm r}^{\rm el}$, we have
\begin{multline}\label{fim-el2}
\!\!	\mu_{\phi_{\rm r}^{\rm el} \phi_{\rm r}^{\rm el}} = \frac{2\pi^2\! d^2 |g_r|^2}{3\lambda^2}\! \sin^2\! \phi_{\rm r}^{\rm az}\! \sin^2\! \phi_{\rm r}^{\rm el}\!  N_{\rm R}\! ({N_{\rm R}^{\rm az}\!-\!1})({2N_{\rm R}^{\rm az}\!-\!1})\\
	+\frac{2\pi^2 d^2 |g_r|^2}{3\lambda^2}\! \cos^2\phi_{\rm r}^{\rm el}  N_{\rm R}({N_{\rm R}^{\rm el}\!-\!1})({2N_{\rm R}^{\rm el}\!-\!1})\\
	-\frac{2\pi^2 d^2 |g_r|^2}{\lambda^2} \sin \!\phi_{\rm r}^{\rm az} \sin\!\phi_{\rm r}^{\rm el}\cos\!\phi_{\rm r}^{\rm el} N_{\rm R} (N_{\rm R}^{\rm az}\!-\!1)(N_{\rm R}^{\rm el}\!-\!1),
\end{multline}
resulting in
\begin{equation} \label{fim-el}
	F_{\!\phi_{\rm r}^{\rm el}\phi_{\rm r}^{\rm el}} = \frac{ 8 \pi^2 }{3 \lambda^2} G_{\phi_{\rm r}^{\rm el}\phi_{\rm r}^{\rm el}} N_{\rm R}{\rm SNR},
\end{equation}
where $G_{\phi_{\rm r}^{\rm el}\phi_{\rm r}^{\rm el}}\approx (N^{\rm az}_{\rm R} d \sin\phi_{\rm r}^{\rm az} \sin\phi_{\rm r}^{\rm el} )^2  + (N^{\rm el}_{\rm R} d \cos\phi_{\rm r}^{\rm el} )^2-\frac{3}{2}N_{\rm R}d^2\sin \!\phi_{\rm r}^{\rm az} \sin\!\phi_{\rm r}^{\rm el}\cos\!\phi_{\rm r}^{\rm el}$ represents the effective array aperture.
Notably, $F_{\phi_{\rm r}^{\rm az}\phi_{\rm r}^{\rm az}}$ and $F_{\phi_{\rm r}^{\rm el}\phi_{\rm r}^{\rm el}}$ are primarily influenced by: (i) the number of receiving antennas, (ii) the effective array aperture, and (iii) the SNR.

For the mutual spatial information between $\phi_{\rm r}^{\rm az}$ and $\phi_{\rm r}^{\rm el}$, we have
\begin{multline}\label{fim-azel2}
	\mu_{\phi_{\rm r}^{\rm az} \phi_{\rm r}^{\rm el}} =
	\frac{\pi^2 d^2 |g_r|^2}{\lambda^2}\! \cos \!\phi_{\rm r}^{\rm az} \!\cos^2\!\phi_{\rm r}^{\rm el} N_{\rm R} (N_{\rm R}^{\rm az}\!-\!1)(N_{\rm R}^{\rm el}\!-\!1)\\
	\!\!-\frac{2\pi^2 d^2 |g_r|^2}{3\lambda^2}\!\cos\!\phi_{\rm r}^{\rm az}\! \cos\!\phi_{\rm r}^{\rm el} \! \sin\!\phi_{\rm r}^{\rm az} \!\sin\!\phi_{\rm r}^{\rm el} \! N_{\rm R}\! (\!{N_{\rm R}^{\rm az}\!\!-\!\!1}\!)\!(\!{2N_{\rm R}^{\rm az}\!\!-\!\!1}\!),
\end{multline}  
which leads to the following expression for the FIM element
\begin{equation} \label{fim-elaz}
	F_{\phi_{\rm r}^{\rm az}\phi_{\rm r}^{\rm el}}   
	 = \frac{ 8 \pi^2 }{3 \lambda^2} G_{\phi_{\rm r}^{\rm az}\phi_{\rm r}^{\rm el}} N_{\rm R}{\rm SNR},
\end{equation}
where ${G_{\phi_{\rm r}^{\rm az}\phi_{\rm r}^{\rm el}}\approx \frac{3}{4} N_{\rm R}d^2\cos\!\phi_{\rm r}^{\rm az} \cos^2\!\phi_{\rm r}^{\rm el}\! -\! \cos\!\phi_{\rm r}^{\rm az} \cos\!\phi_{\rm r}^{\rm el}\sin\!\phi_{\rm r}^{\rm az}}$ ${\sin\!\phi_{\rm r}^{\rm el}(N^{\rm az}_{\rm R}d)^2}$.
Notably, we have $F_{\phi_{\rm r}^{\rm el}\phi_{\rm r}^{\rm az}} =	F_{\phi_{\rm r}^{\rm az}\phi_{\rm r}^{\rm el}}$.
 
The detailed derivation of the other elements in $\mathbf{F}(\boldsymbol{\eta})$ is omitted due to space constraints. However, the specific derivation process follows the same approach as those of $F_{\tau_{\rm r} \tau_{\rm r}}$, $F_{\phi_{\rm r}^{\rm az}\phi_{\rm r}^{\rm az}}$,  $F_{\phi_{\rm r}^{\rm el}\phi_{\rm r}^{\rm el}}$,	$F_{\phi_{\rm r}^{\rm az}\phi_{\rm r}^{\rm el}}$ and $F_{\!\phi_{\rm r}^{\rm el}\phi_{\rm r}^{\rm az}}$.
From the analysis above, it can be concluded that several factors influence the FIM of the channel parameters, including:
\begin{itemize}
    \item The effective bandwidth $B$.
    \item The number of antennas on the transceivers.      
    \item The relative positions of the transceivers with respect to the RIS.
    \item The phase configuration of the RIS.
    \item The beamforming strategies employed at the transceivers.
\end{itemize}
By leveraging the initial position and orientation of the RIS, the optimal beamforming and RIS phase configuration can be determined. However, as structural deformations occur, changes in the position and orientation of the RIS may introduce mismatches in both beamforming and phase configurations. This study also theoretically analyzes the impact of such mismatches on the accuracy of structural state identification.

\begin{figure*}[htbp]
	\centering
	\includegraphics[width=0.90\textwidth]{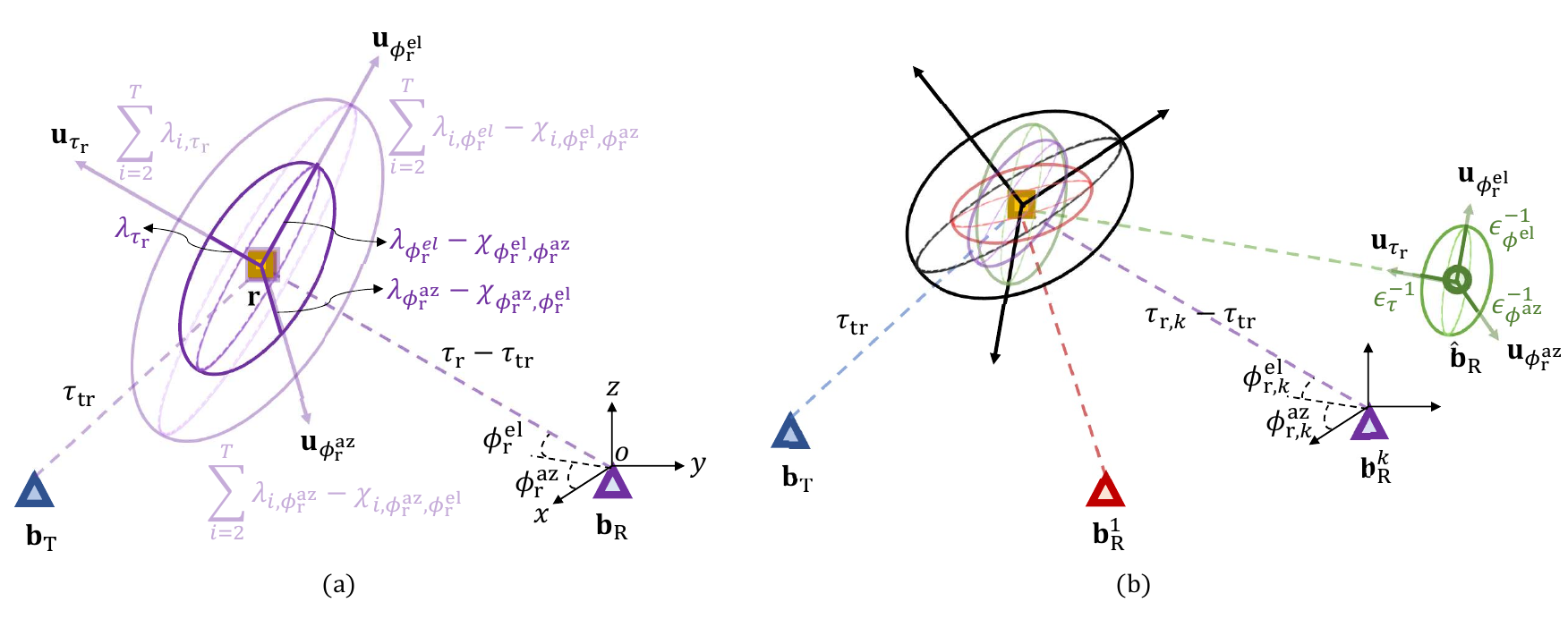}
	\caption{Illustration of information accumulation using information ellipsoids. (a) The information ellipsoids is composed of range information, azimuth angle information, and elevation angle information. Temporal cooperation increases the ellipsoids along three orthogonal directions. (b) Multi-node collaboration with receivers having perfectly known positions and receivers with self-positioning errors.}
	\label{ellips}
\end{figure*}

\subsection{Information Ellipsoid for 3D Localization}
In practical applications, only a subset of $\boldsymbol{\eta}$  is of interest. Let us define the following parameter subsets:
Effective position part $\boldsymbol{\eta}_{\rm pos} = [\tau_{\rm r},\phi_{\rm r}^{\rm az},\phi_{\rm r}^{\rm el}]^{\rm T}$, effective orientation part $\boldsymbol{\eta}_{\rm ori} = [\tilde{\phi}_{\rm r}^{\rm az},\tilde{\phi}_{\rm r}^{\rm el},\tilde{\theta}_{\rm r}^{\rm az},\tilde{\theta}_{\rm r}^{\rm el}]^{\rm T}$, and redundant part $\boldsymbol{\eta}_{\rm red} = [\theta_{\rm r}^{\rm az},\theta_{\rm r}^{\rm el},{\Re}\{g_{\rm r}\},{\Im}\{g_{\rm r}\}]^{\rm T}$.
We define the Jacobian matrix as
 $\mathbf{J}_{\rm A}  = {\partial\boldsymbol{\eta}_{\rm pos}^{\rm T}}/{\partial \mathbf{r}}$, and the submatrix of the FIM as 
\begin{align} \label{D1FIMA}
	\mathbf{F}_{\rm A} &=\left[\begin{array}{ccc}
		F_{\tau_{\rm r} \tau_{\rm r}}  & 0 & 0 \\
		0 & F_{\phi_{\rm r}^{\rm az}\phi_{\rm r}^{\rm az}} & F_{\phi_{\rm r}^{\rm el}\phi_{\rm r}^{\rm az}} \\
		0 & F_{\phi_{\rm r}^{\rm az}\phi_{\rm r}^{\rm el}} & F_{\phi_{\rm r}^{\rm el}\phi_{\rm r}^{\rm el}}
	\end{array}\right].
\end{align}
Thus, the partitioned structure of $\mathbf{F}(\boldsymbol{\eta})$ and $\mathbf{J}$ is expressed as
\begin{equation} 
    \mathbf{F}(\boldsymbol{\eta}) =\left[\begin{array}{cc}
	\mathbf{F}_{\rm A} & \mathbf{F}_{\rm B}\\
	\mathbf{F}_{\rm B}^{\rm T} & \mathbf{F}_{\rm C}
\end{array}\right] 
\text{~and~}  
	\mathbf{J} =\left[ \mathbf{J}_{\rm A} ,\mathbf{J}_{\rm C} \right],
\end{equation}
where $\mathbf{F}_{\rm B}=\mathbf{0}$ in accordance with \cite{FIM2}, \cite{FIM1}. Consequently, the FIM can be rewritten as
\begin{equation} \label{D1FIM3}
\mathbf{J} \mathbf{F}(\boldsymbol{\eta})  \mathbf{J}^{\rm H} = \mathbf{J}_{\rm A} 	\mathbf{F}_{\rm A}  \mathbf{J}_{\rm A}^{\rm H} + \mathbf{J}_{\rm C} 	\mathbf{F}_{\rm C}  \mathbf{J}_{\rm C}^{\rm H}.
\end{equation}
When the orientation and redundant parameters, $\boldsymbol{\eta}_{\rm ori}$ and $\boldsymbol{\eta}_{\rm red}$, are not estimated, their contributions to the localization FIM are zero. According to \eqref{FIM5}, we have
\begin{equation} \label{D1FIM4}
\mathbf{F}(\mathbf{r})= \mathbf{J}_{\rm A} 	\mathbf{F}_{\rm A}  \mathbf{J}_{\rm A}^{\rm H}.
\end{equation}
The element-wise expression of $\mathbf{J}_{\rm A} $ is provided in Appendix \ref{B}.

\begin{theorem}\label{T0}
	The EFIM for 3D localization is given by 
\begin{align}\label{THE1}
	\mathbf{F}_{\rm e}(\mathbf{r})& = \lambda_{\tau_{\rm r}} \mathbf{u}_{\tau_{\rm r}}\mathbf{u}_{\tau_{\rm r}}^{\rm H} + (\lambda_{\phi_{\rm r}^{\rm az}} - \chi_{\phi_{\rm r}^{\rm az}\phi_{\rm r}^{\rm el}}) \mathbf{u}_{\phi_{\rm r}^{\rm az}}\mathbf{u}_{\phi_{\rm r}^{\rm az}}^{\rm H} \nonumber\\
	&~~~+ (\lambda_{\phi_{\rm r}^{\rm el}}-\chi_{\phi_{\rm r}^{\rm el}\phi_{\rm r}^{\rm az}} )\mathbf{u}_{\phi_{\rm r}^{\rm el}}\mathbf{u}_{\phi_{\rm r}^{\rm el}}^{\rm H},
\end{align}
	where $\mathbf{u}_{\tau_{\rm r}}$, $\mathbf{u}_{\phi_{\rm r}^{\rm az}}$, and $\mathbf{u}_{\phi_{\rm r}^{\rm el}}$ are unit direction vectors for range, azimuth AOA, and elevation AOA information, respectively. Their expressions are given by 
	\begin{subequations}\label{unitv}
		\begin{align}
			\mathbf{u}_{\tau_{\rm r}} &=  [ \cos\phi_{\rm r}^{\rm az}\cos\phi_{\rm r}^{\rm el},\sin\phi_{\rm r}^{\rm az}\cos\phi_{\rm r}^{\rm el},\sin\phi_{\rm r}^{\rm el} ]^{\rm T},   \\
			\mathbf{u}_{\phi_{\rm r}^{\rm az}} &=  [-\sin\phi_{\rm r}^{\rm az},\cos\phi_{\rm r}^{\rm az}, 0]^{\rm T}, \\
			\mathbf{u}_{\phi_{\rm r}^{\rm el}} &= [-\cos\phi_{\rm r}^{\rm az}\sin\phi_{\rm r}^{\rm el},-\sin\phi_{\rm r}^{\rm az}\sin\phi_{\rm r}^{\rm el},\cos\phi_{\rm r}^{\rm el} ]^{\rm T}.
		\end{align}
	\end{subequations}
	Furthermore, the terms $ \lambda_{\tau_{\rm r}} $, $(\lambda_{\phi_{\rm r}^{\rm az}}-\chi_{\phi_{\rm r}^{\rm az}\phi_{\rm r}^{\rm el}})$, and $(\lambda_{\phi_{\rm r}^{\rm el}}-\chi_{\phi_{\rm r}^{\rm el}\phi_{\rm r}^{\rm az}})$ represent the information intensities for range, azimuth AOA, and elevation AOA, respectively. The terms $\chi_{\phi_{\rm r}^{\rm az}\phi_{\rm r}^{\rm el}}$ and $\chi_{\phi_{\rm r}^{\rm el}\phi_{\rm r}^{\rm az}}$ denote mutual spatial information intensity. These are expressed as follows:
	\begin{subequations}\label{intensity}
		\begin{align}
			\lambda_{\tau_{\rm r}} &= \frac{ 2\pi^2 B^2 }{3c^2}  N_{\rm R}{\rm SNR}, \label{intensityT} \\
			\lambda_{\phi_{\rm r}^{\rm az}} &= \frac{ 8 \pi^2 }{3 \lambda^2} \frac{G_{\phi_{\rm r}^{\rm az}\phi_{\rm r}^{\rm az}}}{\| \mathbf{r}-\mathbf{b}_{\rm R} \|_2^2\cos^2 \phi_{\rm r}^{\rm el}}   N_{\rm R}{\rm SNR}, \label{intensityAZ}\\
			\lambda_{\phi_{\rm r}^{\rm el}} &=\frac{ 8 \pi^2 }{3 \lambda^2 } \frac{G_{\phi_{\rm r}^{\rm el}\phi_{\rm r}^{\rm el}} } {\| \mathbf{r}-\mathbf{b}_{\rm R} \|_2^2} N_{\rm R}{\rm SNR}, \label{intensityEL} \\
			\chi_{\phi_{\rm r}^{\rm az}\phi_{\rm r}^{\rm el}} &= \frac{8 \pi^2}{3 \lambda^2 }\frac{G_{\phi_{\rm r}^{\rm az}\phi_{\rm r}^{\rm el}}G_{\phi_{\rm r}^{\rm el}\phi_{\rm r}^{\rm az}}}{G_{\phi_{\rm r}^{\rm el}\phi_{\rm r}^{\rm el}}\| \mathbf{r}-\mathbf{b}_{\rm R} \|_2^2\cos^2 \phi_{\rm r}^{\rm el}} N_{\rm R}{\rm SNR}, \\
			\chi_{\phi_{\rm r}^{\rm el}\phi_{\rm r}^{\rm az}} &= \frac{8 \pi^2}{3 \lambda^2 }\frac{G_{\phi_{\rm r}^{\rm az}\phi_{\rm r}^{\rm el}}G_{\phi_{\rm r}^{\rm el}\phi_{\rm r}^{\rm az}}}{G_{\phi_{\rm r}^{\rm az}\phi_{\rm r}^{\rm az}}\| \mathbf{r}-\mathbf{b}_{\rm R} \|_2^2} N_{\rm R}{\rm SNR}.\label{intensityMU}
		\end{align}
	\end{subequations}
\end{theorem}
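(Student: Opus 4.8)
The plan is to evaluate, in closed form, the reduced expression $\mathbf{F}(\mathbf{r})=\mathbf{J}_{\rm A}\mathbf{F}_{\rm A}\mathbf{J}_{\rm A}^{\rm H}$ already obtained in \eqref{D1FIM4}, by first pinning down the Jacobian $\mathbf{J}_{\rm A}$, then exploiting an orthogonal structure, and finally collapsing the residual angular coupling by an equivalent-FIM reduction. For the Jacobian, I would write $\mathbf{r}-\mathbf{b}_{\rm R}$ in spherical coordinates with radius $\rho=\|\mathbf{r}-\mathbf{b}_{\rm R}\|_2$, azimuth $\phi_{\rm r}^{\rm az}$ and elevation $\phi_{\rm r}^{\rm el}$, and differentiate each entry of $\boldsymbol{\eta}_{\rm pos}=[\tau_{\rm r},\phi_{\rm r}^{\rm az},\phi_{\rm r}^{\rm el}]^{\rm T}$ with respect to $\mathbf{r}$ (the entry-wise result is what Appendix~\ref{B} records): the receiver-side range yields $\partial\tau_{\rm r}/\partial\mathbf{r}=\mathbf{u}_{\tau_{\rm r}}/c$ with $\mathbf{u}_{\tau_{\rm r}}$ the radial unit vector, the azimuth AOA yields $\partial\phi_{\rm r}^{\rm az}/\partial\mathbf{r}=\mathbf{u}_{\phi_{\rm r}^{\rm az}}/(\rho\cos\phi_{\rm r}^{\rm el})$, and the elevation AOA yields $\partial\phi_{\rm r}^{\rm el}/\partial\mathbf{r}=\mathbf{u}_{\phi_{\rm r}^{\rm el}}/\rho$, with $\mathbf{u}_{\phi_{\rm r}^{\rm az}}$ and $\mathbf{u}_{\phi_{\rm r}^{\rm el}}$ precisely the vectors in \eqref{unitv}. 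Since a single receiver cannot split the transmitter-to-RIS delay from the RIS-to-receiver delay, only this receiver-side radial term enters the single-link EFIM, the separation of $\tau_{\rm tr}$ by several receivers being handled later. I would then verify by direct substitution that $\{\mathbf{u}_{\tau_{\rm r}},\mathbf{u}_{\phi_{\rm r}^{\rm az}},\mathbf{u}_{\phi_{\rm r}^{\rm el}}\}$ is an orthonormal triad (the standard spherical basis), so $\mathbf{J}_{\rm A}=\mathbf{U}\mathbf{D}$ with $\mathbf{U}=[\mathbf{u}_{\tau_{\rm r}},\mathbf{u}_{\phi_{\rm r}^{\rm az}},\mathbf{u}_{\phi_{\rm r}^{\rm el}}]$ orthogonal and $\mathbf{D}=\diag(1/c,\,1/(\rho\cos\phi_{\rm r}^{\rm el}),\,1/\rho)$.

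Substituting into \eqref{D1FIM4} gives $\mathbf{F}(\mathbf{r})=\mathbf{U}\,(\mathbf{D}\mathbf{F}_{\rm A}\mathbf{D})\,\mathbf{U}^{\rm H}$. Because $\mathbf{F}_{\rm A}$ in \eqref{D1FIMA} has the TOA row and column decoupled from the angular ones, $\mathbf{D}\mathbf{F}_{\rm A}\mathbf{D}$ splits into a scalar block equal to $F_{\tau_{\rm r}\tau_{\rm r}}/c^2$ and a $2\times2$ angular block. Inserting the closed forms of $F_{\tau_{\rm r}\tau_{\rm r}}$, $F_{\phi_{\rm r}^{\rm az}\phi_{\rm r}^{\rm az}}$, $F_{\phi_{\rm r}^{\rm el}\phi_{\rm r}^{\rm el}}$ and $F_{\phi_{\rm r}^{\rm az}\phi_{\rm r}^{\rm el}}$ from \eqref{fim-e2}, \eqref{fim-az}, \eqref{fim-el}, \eqref{fim-elaz} immediately identifies the three diagonal entries with $\lambda_{\tau_{\rm r}}$, $\lambda_{\phi_{\rm r}^{\rm az}}$, $\lambda_{\phi_{\rm r}^{\rm el}}$ of \eqref{intensity} and the single off-diagonal with $\kappa:=F_{\phi_{\rm r}^{\rm az}\phi_{\rm r}^{\rm el}}/(\rho^2\cos\phi_{\rm r}^{\rm el})$.

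To reach the rank-one-sum form \eqref{THE1}, I would apply the equivalent-FIM operation to the angular block: taking, for each angular direction, the Schur complement that removes the other direction converts $\lambda_{\phi_{\rm r}^{\rm az}}$ into $\lambda_{\phi_{\rm r}^{\rm az}}-\kappa^2/\lambda_{\phi_{\rm r}^{\rm el}}$ and $\lambda_{\phi_{\rm r}^{\rm el}}$ into $\lambda_{\phi_{\rm r}^{\rm el}}-\kappa^2/\lambda_{\phi_{\rm r}^{\rm az}}$. A short algebraic check, using $G_{\phi_{\rm r}^{\rm el}\phi_{\rm r}^{\rm az}}=G_{\phi_{\rm r}^{\rm az}\phi_{\rm r}^{\rm el}}$ and the expressions in \eqref{intensity}, shows these two corrections equal $\chi_{\phi_{\rm r}^{\rm az}\phi_{\rm r}^{\rm el}}$ and $\chi_{\phi_{\rm r}^{\rm el}\phi_{\rm r}^{\rm az}}$, respectively. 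Re-expanding the resulting diagonal matrix $\diag(\lambda_{\tau_{\rm r}},\,\lambda_{\phi_{\rm r}^{\rm az}}-\chi_{\phi_{\rm r}^{\rm az}\phi_{\rm r}^{\rm el}},\,\lambda_{\phi_{\rm r}^{\rm el}}-\chi_{\phi_{\rm r}^{\rm el}\phi_{\rm r}^{\rm az}})$ via $\mathbf{U}(\cdot)\mathbf{U}^{\rm H}$ as a weighted sum of the $\mathbf{u}_k\mathbf{u}_k^{\rm H}$ then gives exactly \eqref{THE1}. Because $\mathbf{U}$ is orthogonal, this equivalent matrix has the same inverse diagonal as $\mathbf{F}(\mathbf{r})$ in the spherical basis, hence the same marginal CRLBs along range, azimuth and elevation and the same $\Tr(\mathbf{F}_{\rm e}^{-1}(\mathbf{r}))=\sigma_{\rm p}^2(\mathbf{r})$, which is what justifies calling $\mathbf{F}_{\rm e}(\mathbf{r})$ the EFIM for 3D localization.

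The main obstacle is the Step-1 bookkeeping: differentiating the azimuth and elevation angles in three-dimensional spherical coordinates carefully enough to bring out the $1/(\rho\cos\phi_{\rm r}^{\rm el})$ and $1/\rho$ scalings, and then checking orthonormality of the triad; closely related is the final algebra that matches the Schur-complement corrections to the specific $G$-ratios defining $\chi_{\phi_{\rm r}^{\rm az}\phi_{\rm r}^{\rm el}}$ and $\chi_{\phi_{\rm r}^{\rm el}\phi_{\rm r}^{\rm az}}$. A secondary subtlety is the TOA term, where one must argue that only the receiver-side radial direction $\mathbf{u}_{\tau_{\rm r}}$ appears in this single-link EFIM even though the end-to-end delay also depends on $\mathbf{r}$ through $\tau_{\rm tr}$, whose joint estimation is postponed to the multi-receiver cooperation of Section~\ref{sec-Reference-Point-Localization}.
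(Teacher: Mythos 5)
Your proposal follows essentially the same route as the paper's Appendix~\ref{B}: compute the entrywise Jacobian $\mathbf{J}_{\rm A}$ from the spherical geometry so that $\mathbf{J}_{\rm A}=\mathbf{U}\,\diag\big(1/c,\,1/(\rho\cos\phi_{\rm r}^{\rm el}),\,1/\rho\big)$ with $\mathbf{U}$ unitary, reduce the $2\times2$ angular block by the per-direction Schur-complement corrections that define $\chi_{\phi_{\rm r}^{\rm az}\phi_{\rm r}^{\rm el}}$ and $\chi_{\phi_{\rm r}^{\rm el}\phi_{\rm r}^{\rm az}}$, and justify the resulting diagonal form as the EFIM by the equality of the inverses' diagonals (hence of $\Tr(\cdot^{-1})$), which is exactly the paper's $\tr(\mathbf{A}^{-1})=\tr(\mathbf{B}^{-1})$ verification. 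The argument and intermediate identifications (including the treatment of $\tau_{\rm tr}$ deferred to multi-receiver cooperation) are correct and match the paper's proof.
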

The proof is provided in Appendix \ref{B}.

As illustrated in Fig. \ref{ellips}(a), the vectors $\mathbf{u}_{\tau_{\rm r}}$, $\mathbf{u}_{\phi_{\rm r}^{\rm az}}$ and $\mathbf{u}_{\phi_{\rm r}^{\rm el}}$ are mutually orthogonal. Specifically, the vector $\mathbf{u}_{\tau_{\rm r}}$ points along the line connecting the reference point $\mathbf{b}_{\rm R}$ to $\mathbf{r}$, while $\mathbf{u}_{\phi_{\rm r}^{\rm az}}$ and $\mathbf{u}_{\phi_{\rm r}^{\rm el}}$ span a plane perpendicular to $\mathbf{u}_{\tau_{\rm r}}$.
An ellipsoid characterized by these axes, with corresponding ``radii'' (information intensities)  $ \lambda_{\tau_{\rm r}} $, $(\lambda_{\phi_{\rm r}^{\rm az}}-\chi_{\phi_{\rm r}^{\rm az}\phi_{\rm r}^{\rm el}})$, and $(\lambda_{\phi_{\rm r}^{\rm el}}-\chi_{\phi_{\rm r}^{\rm el}\phi_{\rm r}^{\rm az}})$, is referred to as the \emph{information ellipsoid}.
Several observations regarding the information ellipsoid are noted: 
\begin{itemize}
 \item It can be shown that the values of $ \lambda_{\tau_{\rm r}} $, $\lambda_{\phi_{\rm r}^{\rm az}}$, $\chi_{\phi_{\rm r}^{\rm az}\phi_{\rm r}^{\rm el}}$, $\lambda_{\phi_{\rm r}^{\rm el}}$, and $\chi_{\phi_{\rm r}^{\rm el}\phi_{\rm r}^{\rm az}}$ are non-negative. Moreover, the non-negativity of $(\lambda_{\phi_{\rm r}^{\rm az}}-\chi_{\phi_{\rm r}^{\rm az}\phi_{\rm r}^{\rm el}})$, and $(\lambda_{\phi_{\rm r}^{\rm el}}-\chi_{\phi_{\rm r}^{\rm el}\phi_{\rm r}^{\rm az}})$ can be rigorously established through detailed calculations involving equations \eqref{fim-az2}, \eqref{fim-el2}, and \eqref{fim-azel2}. Specifically, by computing the product of \eqref{fim-az2} and \eqref{fim-el2}, and subtracting the square of \eqref{fim-azel2}, it can be shown that these expressions are inherently non-negative (Appendix \ref{A3}).
 	
 \item In range estimation, the information intensity is proportional to the square of the signal bandwidth, as indicated by \eqref{intensityT}. This relationship is typically exploited by observing the phase differences among the various subcarriers in OFDM systems. In contrast, for absolute carrier phase estimation, $F_{\tau_{\rm r} \tau_{\rm r}}$ is given by \eqref{fim-e22}. Consequently, the range information intensity (RII) $ \lambda_{\tau_{\rm r}} $ becomes ${ 8\pi^2 f^2}  {\rm SNR}/{c^2}$, which is proportional to the square of the carrier frequency. Since the carrier frequency is generally much higher than the signal bandwidth, absolute phase estimation facilitates more precise range measurements, albeit at the expense of requiring specialized estimation techniques.
 
 \item In angle estimation, \eqref{intensityAZ} to \eqref{intensityMU} demonstrate that the angle information intensity (AII) is determined by the effective aperture of the antenna array. When the distance between the receiving antenna and the target is constant, increasing the number of antennas or enlarging their spacing enhances the effective aperture. Conversely, if the number of antennas and their spacing remain fixed, the effective aperture decreases as the target distance increases or as the angular offset grows. Therefore, angle measurement is generally more suitable for short-range sensing, whereas the RII  is unaffected by the sensing distance.

 \item The TOA and AOAs at the receiver provide sufficient information to estimate the RIS position. If the AOA and AOD at the RIS are also available, it is possible to estimate the RIS orientation using, e.g., \eqref{GR3} and \eqref{GR4}. When analyzing \eqref{D1FIM4}, one could in principle include the derivatives of the TOA and AOA with respect to the orientation in $\mathbf{J}_{\rm A}$. However, since these derivatives are zero, the final result is unaffected. For simplicity, orientation is therefore not considered in the present analysis. The Fisher information related to RIS orientation can be examined using the same methodology, which is an interesting direction for future research.
\end{itemize}

\subsection{Structure State Identification}
To determine whether the RIS position has changed, we compare the estimated RIS position with its initial value, $\mathbf{r}_0$. This problem can be formulated as a classical hypothesis testing problem: 
\begin{align} \label{HT}
	\mathcal{H}_0 & : \mathbf{r} = \mathbf{r}_{0},\\
	\mathcal{H}_1 & : \mathbf{r} \ne \mathbf{r}_{0}.
\end{align}
Under the null hypothesis 
$\mathcal{H}_0$, the RIS is assumed to remain at its initial position, while the alternative hypothesis $\mathcal{H}_1$ indicates that a positional change has occurred.
To account for measurement uncertainties and distinguish them from genuine positional changes, we employ the Wald test. Specifically, the test decides in favor of $\mathcal{H}_1$ if the following condition is satisfied:
\begin{equation} \label{wt} 
	T_W = (\hat{\mathbf{r}}-\mathbf{r}_{0})^{\rm T}\mathbf{F}(\hat{\mathbf{r}})(\hat{\mathbf{r}}-\mathbf{r}_{0}) > \kappa,
\end{equation}
where $\kappa$ is an empirically determined threshold,\footnote{For example, a displacement change of $50$ mm is set as the threshold ($\kappa=50$ mm) for the ``Damaged" state in ordinary buildings and structures, as stated in \cite{SHM}. Generally, the threshold $\kappa$ can be set in accordance with industry standards for structural engineering \cite{ISO4553_2022}. } and $\hat{\mathbf{r}}$ represents the maximum likelihood estimate of $\mathbf{r}$ under $\mathcal{H}_1$. 

In practice, beyond simply detecting a change in the reference point position, the structure state is defined in terms of the serviceability limit state as outlined in \cite{ISO4553_2022}. We assume that the structure state $S$  takes values from a discrete set $\mathbb{S}=\{S_1={\sf U},\ S_2={\sf D} \}$, where ``${\sf U}$'' represent ``Unchanged'' and ``${\sf D}$'' represent ``Damaged'' \cite{Detect}. Based on the Wald test in \eqref{wt}, we define the weighted displacement as
\begin{equation}
	d(\hat{\mathbf{r}}) = (\hat{\mathbf{r}}-\mathbf{r}_{0})^{\rm T}\mathbf{F}(\hat{\mathbf{r}})(\hat{\mathbf{r}}-\mathbf{r}_{0}).
\end{equation}	
The defined weighted displacement not only considers the discrepancy between the measured value and the true value $(\hat{\mathbf{r}}-\mathbf{r}_{0})$, but also takes the reliability of this measurement $\mathbf{F}(\hat{\mathbf{r}})$ into account.
When $d(\hat{\mathbf{r}})$ exceeds the serviceability limit threshold $\kappa$ specified in \cite{ISO4553_2022}, damage is inferred; thus, $\mathcal{H}_0$ corresponds to the state ${\sf U}$ and $\mathcal{H}_1$ to state ${\sf D}$.

The identification of the structure state can be approached as an inference problem. Using Bayes' theorem, the posterior probability is calculated as 
\begin{equation} \label{posterior}
	{\rm P}\big(S_l | d(\hat{\mathbf{r}})\big) =  \frac{f\big(d(\hat{\mathbf{r}})|S_l\big){\rm P}\big(S_l\big)}{f\big(d(\hat{\mathbf{r}})\big)},
\end{equation}
where ${\rm P}(S_l)$ is the prior probability, determined by the structure's attributes, and $f\big( d(\hat{\mathbf{r}})|S_l \big)$ is the likelihood function, which is derived from the parameter estimation implementation. The denominator, $f\big(d(\hat{\mathbf{r}})\big)$, is computed as $  \sum_{l=1}^{2} f\big(d(\hat{\mathbf{r}})|S_l\big){\rm P}\big(S_l\big)$.

\begin{figure}
	\centering
	\includegraphics[width=0.34\textwidth]{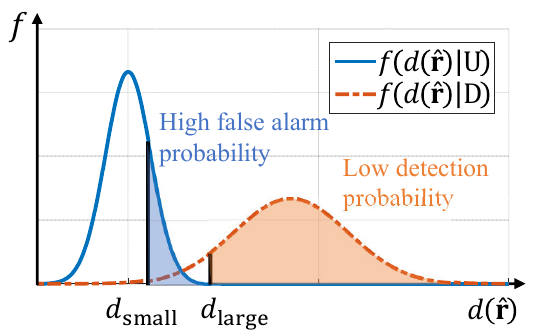}
	\caption{The tradeoff between false alarm probability and detection probability in SHM.}
	\label{detect}
\end{figure}

The likelihood function is designed based on the derivation of the PEB.
Initially, we encode knowledge of the ${\sf U}$ state into the likelihood function for $\mathcal{H}_0$, which provides the probability density $f\big(d(\hat{\mathbf{r}})|{\sf U}\big)$ of $d(\hat{\mathbf{r}})$.
We assume that $f\big(d(\hat{\mathbf{r}})|{\sf U}\big)$ follows a Gaussian distribution with zero mean and variance $\sigma^2_{\rm p}(\mathbf{r}_0)$.
The likelihood function under   $\mathcal{H}_1$, $f(d(\hat{\mathbf{r}})|{\sf D})$, is modeled as a Gaussian distribution with a mean of $d(\hat{\mathbf{r}})$ and variance $\sigma^2_{\rm p}(\hat{\mathbf{r}})$.

The PEB governs the shapes of the likelihood functions $f\big(d(\hat{\mathbf{r}})|{\sf U}\big)$ and $f\big(d(\hat{\mathbf{r}})|{\sf D}\big)$, thereby impacting the detection probability of the damaged state, as illustrated in Fig. \ref{detect}. 
Our observations are twofold:
\begin{itemize}
	\item {\bf PEB:} When the threshold $\kappa$ remains constant, a smaller PEB results in a sharper distribution function, reducing the overlap between the two functions. This reduction in overlap lowers the false alarm probability and increases the detection probability.
	
	\item {\bf Threshold Selection Trade-off:} As the deformation detection requirements become more stringent (i.e., with a lower $\kappa$), the overlap between the two distributions increases. This introduces a trade-off between detection probability and false alarm probability. A lower threshold ($d_{\rm small}$) increases the false alarm rate, whereas a higher threshold ($d_{\rm large}$) reduces the detection probability.
\end{itemize}
Consequently, our objective is to minimize the PEB through the optimal configuration of the cellular system to achieve the required structure state detection performance.
In addition, by combining displacement monitoring from multiple reference points, more complex types of deformation, such as roll, torsion, and bending [Fig. \ref{scenario}(b)-(g)], can be identified. Furthermore, continuous monitoring can incorporate Bayesian principles for multistage decision-making. These are all worth exploring in future research.

\section{Spatiotemporal Cooperation in ISAC Networks}\label{sec-Reference-Point-Localization}
In this section, we explore the opportunities for spatiotemporal cooperation in ISAC networks. Specifically, we analyze the 3D localization performance of the reference point (i.e., the RIS) under three distinct scenarios, which are discussed in the following subsections: (A) increasing the number of observation time instants, (B) increasing the number of cooperative receivers, and (C) increasing the number of cooperative receivers with self-positioning errors.

\subsection{Increasing Observation Time Instants}

Since structural changes often occur gradually, the idle resources of the transceivers during off-peak periods can be effectively utilized. This study aims to investigate whether extending the observation time instants can enhance the sensing performance.

For $T$ observation time instants, the accumulated measurements are expressed as
\begin{equation} \label{y--1}
	\bar{\mathbf{y}}_{[2:T]}= {\left[\mathbf{y}_{2,1}^{\rm T},\ldots, \mathbf{y}_{T,{T-1}}^{\rm T}\right]}^{\rm T},
\end{equation}
where $\bar{\mathbf{y}}_{[2:T]} \in \mathbb{C}^{N_{\rm R}(T-1)\times1}$.
The FIM for localization over $T$ periods is given by
\begin{align} \label{FIM21}
\mathbf{F}_{[2:T]}(\mathbf{r})&=\frac{2 }{\sigma^2 }\Re\left\{  \frac{\partial \bar{\mathbf{y}}^{\rm T}_{[2:T]}}{\partial \mathbf{r}}\left(\frac{\partial \bar{\mathbf{y}}^{\rm T}_{[2:T]}}{\partial \mathbf{r}}\right)^{\rm H}\right\} \notag \\
&= \frac{2 }{\sigma^2 } \mathbf{J}  \Re\left\{ \sum_{i=2}^{T}\frac{\partial \mathbf{y}^{\rm T}_{i,{i-1}}}{\partial \boldsymbol{\eta}} \left(\frac{\partial \mathbf{y}^{\rm T}_{i,{i-1}}}{\partial \boldsymbol{\eta}}\right)^{\rm H} \right\} \mathbf{J}^{\rm H}.
\end{align}
According to Theorem \ref{T0}, the following corollary holds.

\begin{corollary}\label{CO1}
The EFIM for increased observation time instants is given by	
\begin{align}\label{EFIM4}  
	\mathbf{F}_{{\rm e }{[2:T]}}(\mathbf{r}) &= \sum_{i=2}^{T}  \lambda_{i,\tau_{{\rm r}}} \mathbf{u}_{\tau_{\rm r}}\mathbf{u}_{\tau_{\rm r}}^{\rm H}   \nonumber\\
	&~~+ \sum_{i=2}^{T} (\lambda_{i,\phi_{{\rm r}}^{\rm az}} -\chi_{i,\phi_{\rm r}^{\rm az}\phi_{\rm r}^{\rm el}})  \mathbf{u}_{\phi_{\rm r}^{\rm az}}\mathbf{u}_{\phi_{\rm r}^{\rm az}}^{\rm H}  \nonumber\\
	&~~~~+ \sum_{i=2}^{T}(\lambda_{i,\phi_{{\rm r}}^{\rm el}}-\chi_{i,\phi_{\rm r}^{\rm el}\phi_{\rm r}^{\rm az}}) \mathbf{u}_{\phi_{\rm r}^{\rm el}}\mathbf{u}_{\phi_{\rm r}^{\rm el}}^{\rm H}.
\end{align}
In \eqref{EFIM4}, the range and angle information at different time instants are aligned along consistent directions.
Moreover, time accumulation yields localization performance gains regardless of the RIS phase configurations. If the RIS phases are optimized,\footnote{The optimal phase configuration alternates between $\mathbf{\Omega}_{-}$ and $\mathbf{\Omega}_{+}$ for consecutive observation periods.} according to \eqref{THE1}, the EFIM simplifies to 
\begin{equation} \label{FIM22}
	\mathbf{F}_{{\rm e }{[2:T]}}(\mathbf{r}) =(T-1)\mathbf{F}_{\rm e }(\mathbf{r}) .
\end{equation}
\end{corollary}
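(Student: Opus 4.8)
The plan is to reduce everything to Theorem~\ref{T0} applied observation by observation, exploiting that only the RIS phase pattern changes across the $T$ snapshots while all geometric parameters stay fixed. First I would rewrite \eqref{FIM21} as a genuine sum of single-snapshot FIMs: since the chain rule gives $\partial\bar{\mathbf{y}}^{\rm T}_{[2:T]}/\partial\mathbf{r} = \mathbf{J}\,[\,\partial\mathbf{y}^{\rm T}_{2,1}/\partial\boldsymbol{\eta},\ldots,\partial\mathbf{y}^{\rm T}_{T,T-1}/\partial\boldsymbol{\eta}\,]$, we obtain $\mathbf{F}_{[2:T]}(\mathbf{r}) = \sum_{i=2}^{T}\mathbf{J}\,\mathbf{F}^{[i]}(\boldsymbol{\eta})\,\mathbf{J}^{\rm H} = \sum_{i=2}^{T}\mathbf{F}^{[i]}(\mathbf{r})$, where $\mathbf{F}^{[i]}(\boldsymbol{\eta})$ is the FIM of $\boldsymbol{\eta}$ obtained from the single differenced observation $\mathbf{y}_{i,i-1}$ built with RIS configurations $\mathbf{\Omega}^{[i]},\mathbf{\Omega}^{[i-1]}$, and $\mathbf{F}^{[i]}(\mathbf{r})=\mathbf{J}\mathbf{F}^{[i]}(\boldsymbol{\eta})\mathbf{J}^{\rm H}$. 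The crucial point is that $\mathbf{J}=\partial\boldsymbol{\eta}^{\rm T}/\partial\mathbf{r}$ (equivalently $\mathbf{J}_{\rm A}$ and $\mathbf{J}_{\rm C}$) is the \emph{same} for all $i$: the TOA and the AOAs/AODs that define it are functions of the fixed RIS position $\mathbf{r}$ and the fixed transceiver positions only, whereas $\mathbf{\Omega}^{[i]}$ enters $\mathbf{y}_{i,i-1}$ merely as a scalar factor through $\beta_{**}$.

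Next I would push the EFIM operation through the sum. As in the proof of Theorem~\ref{T0} (Appendix~\ref{B}), each $\mathbf{F}^{[i]}(\boldsymbol{\eta})$ is block diagonal with $\mathbf{F}_{\rm B}=\mathbf{0}$, so not estimating the orientation and redundant parameters gives $\mathbf{F}^{[i]}(\mathbf{r})=\mathbf{J}_{\rm A}\mathbf{F}_{\rm A}^{[i]}\mathbf{J}_{\rm A}^{\rm H}$, which by \eqref{THE1} is exactly the rank-decomposition of Theorem~\ref{T0} with each intensity replaced by its instant-$i$ counterpart $\lambda_{i,\tau_{\rm r}}$, $\lambda_{i,\phi_{\rm r}^{\rm az}}-\chi_{i,\phi_{\rm r}^{\rm az}\phi_{\rm r}^{\rm el}}$, $\lambda_{i,\phi_{\rm r}^{\rm el}}-\chi_{i,\phi_{\rm r}^{\rm el}\phi_{\rm r}^{\rm az}}$. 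All dependence on $\mathbf{\Omega}^{[i]},\mathbf{\Omega}^{[i-1]}$ is confined to those scalars (through the per-observation SNR, cf. \eqref{beta_gamma_**}), while the unit vectors $\mathbf{u}_{\tau_{\rm r}},\mathbf{u}_{\phi_{\rm r}^{\rm az}},\mathbf{u}_{\phi_{\rm r}^{\rm el}}$ in \eqref{unitv} depend only on $\phi_{\rm r}^{\rm az},\phi_{\rm r}^{\rm el}$ and are hence common to every $i$. Because $\mathbf{F}_{\rm B}=\mathbf{0}$ there is no Schur complement to form, so $\mathbf{F}_{{\rm e}[2:T]}(\mathbf{r})=\sum_{i=2}^{T}\mathbf{F}_{\rm e}^{[i]}(\mathbf{r})$; collecting the shared dyads $\mathbf{u}_{\tau_{\rm r}}\mathbf{u}_{\tau_{\rm r}}^{\rm H}$, $\mathbf{u}_{\phi_{\rm r}^{\rm az}}\mathbf{u}_{\phi_{\rm r}^{\rm az}}^{\rm H}$, $\mathbf{u}_{\phi_{\rm r}^{\rm el}}\mathbf{u}_{\phi_{\rm r}^{\rm el}}^{\rm H}$ then yields \eqref{EFIM4}. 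Since each intensity is nonnegative---$\beta_{**}\ge0$ for every RIS phase pattern, and the angular intensities $\lambda_{\phi_{\rm r}^{\rm az}}-\chi_{\phi_{\rm r}^{\rm az}\phi_{\rm r}^{\rm el}}$ and $\lambda_{\phi_{\rm r}^{\rm el}}-\chi_{\phi_{\rm r}^{\rm el}\phi_{\rm r}^{\rm az}}$ are nonnegative by the computation in Appendix~\ref{A3}---every summand is positive semidefinite, which establishes the claimed monotone accumulation of localization information regardless of the RIS phases.

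For the optimized-phase statement \eqref{FIM22}, I would invoke the construction of Section~\ref{ris-conf}: alternating between $\mathbf{\Omega}_{-}$ and $\mathbf{\Omega}_{+}$ makes the $m$-th diagonal entry of $\mathbf{\Omega}^{[i]}-\mathbf{\Omega}^{[i-1]}$ equal to $\pm2e^{-j(\bar{\phi}_m-\bar{\theta}_m)}$ for every consecutive pair; since $\beta_{**}$ in \eqref{beta_gamma_**} depends on that matrix only through a Hermitian quadratic form insensitive to the overall sign, $\beta_{**}$ attains its optimal value at each instant, and with $\mathbf{f}=\mathbf{a}_{\rm T}^{\rm H}(\theta_{\rm r}^{\rm az},\theta_{\rm r}^{\rm el})$ the factor $\gamma_{**}$ is likewise constant. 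Hence the per-observation SNR, and with it every intensity $\lambda_{i,\cdot}$ and $\chi_{i,\cdot}$, is independent of $i$ and equals the single-snapshot value in Theorem~\ref{T0}, so each of the $T-1$ summands in \eqref{EFIM4} equals $\mathbf{F}_{\rm e}(\mathbf{r})$ and \eqref{FIM22} follows.

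The step I expect to be delicate is the interchange ``EFIM of the sum $=$ sum of the per-snapshot EFIMs.'' It holds here only because the block-diagonal approximation forces $\mathbf{F}_{\rm B}=\mathbf{0}$ (so there is literally no Schur complement) and because the nuisance parameters $\boldsymbol{\eta}_{\rm ori},\boldsymbol{\eta}_{\rm red}$ (in particular $g_{\rm r}$), together with the angles that build $\mathbf{J}$, are shared by all snapshots, making $\mathbf{J}_{\rm A}$ independent of $i$; drop either fact and the per-snapshot EFIMs no longer simply add (except in the optimized case, where all snapshots coincide). I would also flag, as a standing approximation inherited from \eqref{FIM21}, that the differenced noises $\mathbf{w}_{i,i-1}$ are treated as uncorrelated across $i$ although consecutive ones share $\mathbf{w}^{[i]}$; this is what licenses writing the accumulated FIM as a plain sum over $i$ in the first place.
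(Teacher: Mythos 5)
Your proposal is correct and follows essentially the same route as the paper: decompose the accumulated FIM as the per-snapshot sum in \eqref{FIM21}, note that the Jacobian and hence the unit vectors of \eqref{unitv} are common to all instants, apply Theorem~\ref{T0} snapshot-by-snapshot with instant-dependent intensities to get \eqref{EFIM4}, and use the optimal alternating $\mathbf{\Omega}_{-}/\mathbf{\Omega}_{+}$ configuration of Section~\ref{ris-conf} to make every summand equal to $\mathbf{F}_{\rm e}(\mathbf{r})$, yielding \eqref{FIM22}. Your added remarks on the sign-insensitivity of $\beta_{**}$, the absence of a Schur complement due to $\mathbf{F}_{\rm B}=\mathbf{0}$, and the treated-as-independent differenced noises are accurate clarifications of assumptions the paper uses implicitly, not a different argument.
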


Corollary \ref{CO1} implies that increasing the number of observation time instants expands the information ellipsoid, as shown in Fig. \ref{ellips}(a), thereby effectively reducing localization errors.

\subsection{Increasing the Number of Receivers}\label{mr} 
A key advantage of cellular systems compared to other sensing systems is their dense network coverage.
By utilizing multiple base stations cooperatively, more accurate sensing results can be achieved.

For $K$ cooperative receivers, the accumulated measurements can be expressed as 
\begin{equation} \label{y--2}
	\bar{\mathbf{y}}^{[1:K]}= \left[({\mathbf{y}_{i,i-1}^1})^{\rm T},\ldots, ({\mathbf{y}_{i,i-1}^K})^{\rm T} \right]^{\rm T}.
\end{equation}
The total channel parameters for $K$ receivers are extended as
\begin{equation}
\bar{\boldsymbol{\eta}}=[\boldsymbol{\eta}_1^{\rm T}, \ldots,\boldsymbol{\eta}_K^{\rm T}]^{\rm T},
\end{equation}
where $\boldsymbol{\eta}_k\!=\![\tau_{{\rm r},k},\!\phi_{{\rm r},k}^{\rm az},\!\phi_{{\rm r},k}^{\rm el},\!\tilde{\phi}_{{\rm r},k}^{\rm az},\!\tilde{\phi}_{{\rm r},k}^{\rm el},\!\tilde{\theta}_{{\rm r},k}^{\rm az},\!\tilde{\theta}_{{\rm r},k}^{\rm el},\! \theta_{{\rm r},k}^{\rm az},\!\theta_{{\rm r},k}^{\rm el},\!{\Re}\{g_{{\rm r},k}\}\!,\\ {\Im}\{g_{{\rm r},k}\}]^{\rm T}.$
Consequently, the FIM for $\bar{\boldsymbol{\eta}}$ is given by
\begin{equation} \label{FIMeta1}
	\mathbf{F}^{[1:K]}(\bar{\boldsymbol{\eta}})=\frac{2 }{\sigma^2 }\Re\left\{ \frac{\partial (\bar{\mathbf{y}}^{[1:K]})^{\rm T}}{\partial \bar{\boldsymbol{\eta}} }\left(\frac{\partial (\bar{\mathbf{y}}^{[1:K]})^{\rm T}}{\partial \bar{\boldsymbol{\eta}} }\right)^{\rm H} \right\}.
\end{equation}
Define $	\bar{\mathbf{J}} =\big[\mathbf{J}_1,  \mathbf{J}_2,  \ldots , \mathbf{J}_K\big]$,
where $ \mathbf{J}_k = {\partial\boldsymbol{\eta}_k^{\rm T}}/{\partial \mathbf{r}}$.
Therefore, the resulting FIM is
\begin{equation} \label{FIM66}
\mathbf{F}^{[1:K]}(\mathbf{r})=\bar{\mathbf{J}}	F(\bar{\boldsymbol{\eta}})  \bar{\mathbf{J}} ^{\rm H}.
\end{equation} 
According to Theorem~\ref{T0}, we have the following corollary.
\begin{corollary}\label{C3}
	The EFIM for an increased number of receivers is given by	
\begin{align}\label{EFIM44}
	\mathbf{F}_{\rm e }^{[1:K]}(\mathbf{r}) & = \sum_{k=1}^{K} \lambda_{\tau_{\rm r}}^{k} \mathbf{u}^{k}_{\tau_{\rm r}}\mathbf{u}^{k{\rm H}}_{\tau_{\rm r}} \nonumber\\
	&~~+ \sum_{k=1}^{K} (\lambda_{\phi_{{\rm r}}^{\rm az}}^{k}-\chi_{\phi_{\rm r}^{\rm az}\phi_{\rm r}^{\rm el}}^{k}) \mathbf{u}^{k}_{\phi_{\rm r}^{\rm az}}\mathbf{u}^{k{\rm H}}_{\phi_{\rm r}^{\rm az}} \nonumber\\
	&~~+ \sum_{k=1}^{K} (\lambda^{k}_{\phi_{{\rm r}}^{\rm el}}-\chi_{\phi_{\rm r}^{\rm el}\phi_{\rm r}^{\rm az}}^{k})\mathbf{u}^{k}_{\phi_{\rm r}^{\rm el}}\mathbf{u}^{k{{\rm H}}}_{\phi_{\rm r}^{\rm el}},
\end{align}
where the range and angle information from different receivers generally point in different directions. 
\end{corollary}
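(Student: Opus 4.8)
The plan is to deduce Corollary~\ref{C3} from $K$ parallel applications of Theorem~\ref{T0}, using the fact that the observations collected at distinct receivers are corrupted by independent noise and depend on disjoint blocks of the stacked parameter vector $\bar{\boldsymbol{\eta}}$. The per-receiver information then simply adds up, and each summand is already handled by Theorem~\ref{T0}.

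First I would note that, since $\bar{\mathbf{y}}^{[1:K]}$ concatenates the per-receiver signals $\mathbf{y}_{i,i-1}^k$ and each $\mathbf{y}_{i,i-1}^k$ is a function of $\boldsymbol{\eta}_k$ only, the Jacobian $\partial(\bar{\mathbf{y}}^{[1:K]})^{\rm T}/\partial\bar{\boldsymbol{\eta}}$ is block diagonal; hence $\mathbf{F}^{[1:K]}(\bar{\boldsymbol{\eta}}) = \diag\!\big(\mathbf{F}(\boldsymbol{\eta}_1),\ldots,\mathbf{F}(\boldsymbol{\eta}_K)\big)$. Substituting this and the row-block form $\bar{\mathbf{J}}=[\mathbf{J}_1,\ldots,\mathbf{J}_K]$ into \eqref{FIM66} gives the additive identity $\mathbf{F}^{[1:K]}(\mathbf{r}) = \sum_{k=1}^{K}\mathbf{J}_k\mathbf{F}(\boldsymbol{\eta}_k)\mathbf{J}_k^{\rm H}$. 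For each fixed $k$ I would then repeat verbatim the reduction leading to \eqref{D1FIM4}: partition $\boldsymbol{\eta}_k$ into position, orientation, and redundant parts, and use $\mathbf{F}_{\rm B}=\mathbf{0}$ from \cite{FIM2},\cite{FIM1} together with the fact that the orientation and redundant parameters are not estimated, to obtain $\mathbf{J}_k\mathbf{F}(\boldsymbol{\eta}_k)\mathbf{J}_k^{\rm H} = \mathbf{J}_{{\rm A},k}\mathbf{F}_{{\rm A},k}\mathbf{J}_{{\rm A},k}^{\rm H}$, the single-receiver localization FIM associated with receiver $k$. Applying Theorem~\ref{T0} to this $k$-th term yields its rank-one decomposition along the direction vectors $\mathbf{u}^k_{\tau_{\rm r}},\mathbf{u}^k_{\phi_{\rm r}^{\rm az}},\mathbf{u}^k_{\phi_{\rm r}^{\rm el}}$ --- these are just \eqref{unitv} evaluated at the angles $\phi_{{\rm r},k}^{\rm az},\phi_{{\rm r},k}^{\rm el}$ subtended at $\mathbf{b}_{{\rm R},k}$ --- with intensities $\lambda_{\tau_{\rm r}}^k$, $\lambda_{\phi_{\rm r}^{\rm az}}^k-\chi_{\phi_{\rm r}^{\rm az}\phi_{\rm r}^{\rm el}}^k$, $\lambda_{\phi_{\rm r}^{\rm el}}^k-\chi_{\phi_{\rm r}^{\rm el}\phi_{\rm r}^{\rm az}}^k$ obtained from \eqref{intensity} with the per-receiver SNR, range $\|\mathbf{r}-\mathbf{b}_{{\rm R},k}\|_2$, and effective apertures. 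Summing over $k$ produces \eqref{EFIM44}.

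The only step that needs genuine care is checking that the passage to the EFIM commutes with the summation over receivers, i.e.\ that eliminating all orientation and redundant nuisance parameters jointly gives the same $3\times3$ matrix as eliminating them receiver-by-receiver and then adding. This holds here because $\mathbf{F}^{[1:K]}(\bar{\boldsymbol{\eta}})$ is block diagonal in $k$ and, within each block, the position/nuisance cross-term vanishes ($\mathbf{F}_{\rm B}=\mathbf{0}$); no coupling between receivers is introduced at the level of $\bar{\boldsymbol{\eta}}$, since the shared state $\mathbf{r}$ enters only through the Jacobians $\mathbf{J}_k$, which act after the Schur complement. The concluding remark that the triads $\{\mathbf{u}^k_{\tau_{\rm r}},\mathbf{u}^k_{\phi_{\rm r}^{\rm az}},\mathbf{u}^k_{\phi_{\rm r}^{\rm el}}\}$ generally point in different directions then follows directly from \eqref{unitv}, each triad being aligned with the line of sight from a distinct receiver $\mathbf{b}_{{\rm R},k}$ to $\mathbf{r}$.
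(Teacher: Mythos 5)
Your proposal is correct and follows essentially the same route as the paper, which states Corollary~3 directly as a consequence of Theorem~1: the stacked FIM $\mathbf{F}^{[1:K]}(\bar{\boldsymbol{\eta}})$ is block diagonal because each receiver's measurement depends only on its own $\boldsymbol{\eta}_k$ with independent noise, so \eqref{FIM66} reduces to $\sum_k \mathbf{J}_k\mathbf{F}(\boldsymbol{\eta}_k)\mathbf{J}_k^{\rm H}$ and Theorem~1 is applied term by term with per-receiver angles, ranges, and SNRs. Your additional check that the nuisance-parameter elimination commutes with the sum over $k$ (owing to the block-diagonal structure and $\mathbf{F}_{\rm B}=\mathbf{0}$) makes explicit what the paper leaves implicit, but it is the same argument, not a different one.
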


\begin{figure*}[htbp]
	\centering
	\includegraphics[width=0.9\textwidth]{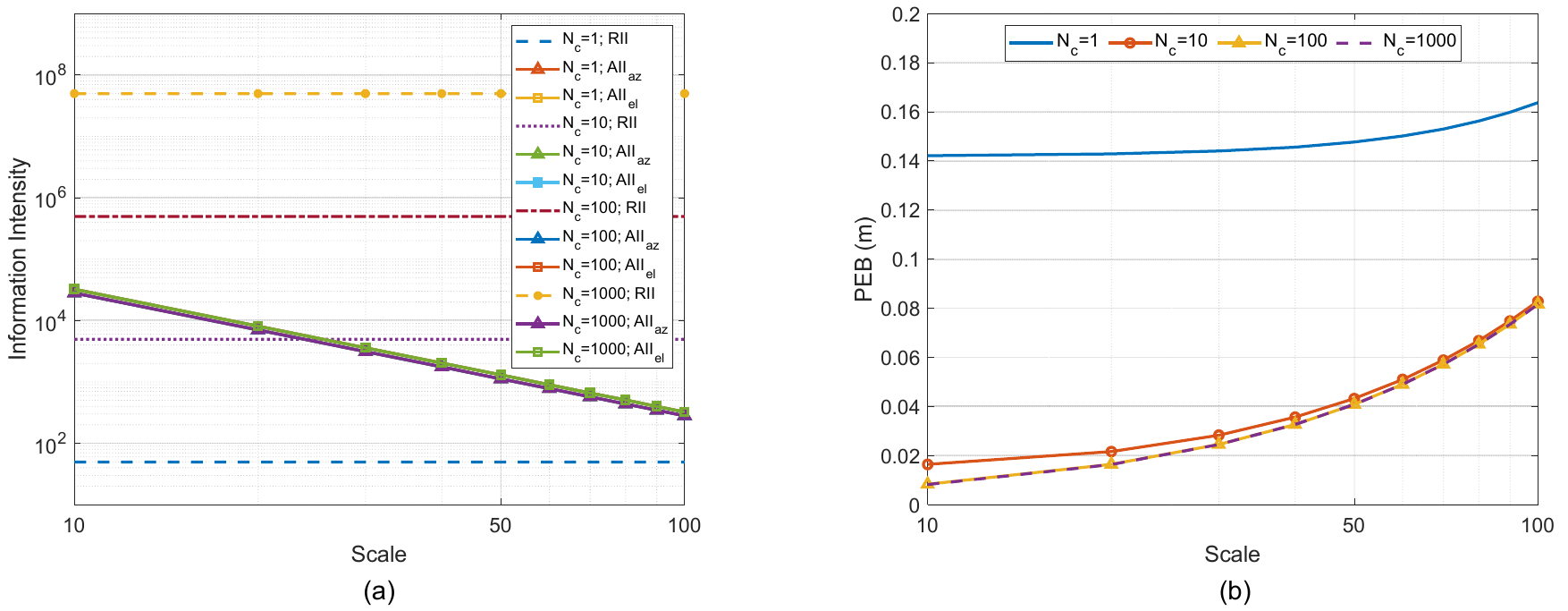}
	\caption{Relationship among scene scale, bandwidth, and (a) RII and AII, and (b) PEB.}
	\label{fig1} 
    \vspace{-0.3cm}
\end{figure*}

Since the directional matrix associated with each receiver depends on its position, as shown in Fig. \ref{ellips}(b), optimizing receiver placement can make the information ellipsoid more isotropic, resembling a uniform sphere, particularly when the number of cooperative receivers is limited. This ensures that the reference point has a uniform localization capability in 3D space.
The claim that the CRLB decreases with an increasing number of receivers is supported by the following corollary.
\begin{corollary}\label{C2}
	Given $\bar{\mathbf{y}}^{[1:K+1]}$, the CRLB satisfies the following inequality:
	\begin{equation} \label{eq:th2} 
		{\rm CRLB}^{[1:K+1]}(\mathbf{r})  \leq {\rm CRLB}^{[1:K]}(\mathbf{r}).
		\end{equation}
\end{corollary}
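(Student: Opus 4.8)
The plan is to deduce the inequality from a Loewner-order (positive semidefinite) comparison of the equivalent Fisher information matrices, combined with two standard monotonicity facts: matrix inversion is operator-antitone on the positive-definite cone, and the trace is monotone with respect to the Loewner order.

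\textbf{Step 1: an additive decomposition of the EFIM.} First I would invoke Corollary~\ref{C3}. By \eqref{EFIM44}, the $K{+}1$-receiver EFIM is exactly the $K$-receiver EFIM plus the single-receiver contribution of the new node:
\begin{equation}
\mathbf{F}_{\rm e}^{[1:K+1]}(\mathbf{r}) = \mathbf{F}_{\rm e}^{[1:K]}(\mathbf{r}) + \boldsymbol{\Delta}_{K+1},
\end{equation}
where $\boldsymbol{\Delta}_{K+1} = \lambda_{\tau_{\rm r}}^{K+1}\,\mathbf{u}_{\tau_{\rm r}}^{K+1}\mathbf{u}_{\tau_{\rm r}}^{K+1\,{\rm H}} + (\lambda_{\phi_{\rm r}^{\rm az}}^{K+1}-\chi_{\phi_{\rm r}^{\rm az}\phi_{\rm r}^{\rm el}}^{K+1})\,\mathbf{u}_{\phi_{\rm r}^{\rm az}}^{K+1}\mathbf{u}_{\phi_{\rm r}^{\rm az}}^{K+1\,{\rm H}} + (\lambda_{\phi_{\rm r}^{\rm el}}^{K+1}-\chi_{\phi_{\rm r}^{\rm el}\phi_{\rm r}^{\rm az}}^{K+1})\,\mathbf{u}_{\phi_{\rm r}^{\rm el}}^{K+1}\mathbf{u}_{\phi_{\rm r}^{\rm el}}^{K+1\,{\rm H}}$. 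Each outer product $\mathbf{u}\mathbf{u}^{\rm H}$ is rank-one and positive semidefinite, and by the first bullet following Theorem~\ref{T0} (proved in Appendix~\ref{A3}) the three scalar coefficients are non-negative. Hence $\boldsymbol{\Delta}_{K+1}\succcurlyeq\mathbf{0}$, which gives $\mathbf{F}_{\rm e}^{[1:K+1]}(\mathbf{r})\succcurlyeq\mathbf{F}_{\rm e}^{[1:K]}(\mathbf{r})$.

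\textbf{Step 2: invertibility and the antitone inversion argument.} Next I would note that for a single receiver the directions $\mathbf{u}_{\tau_{\rm r}},\mathbf{u}_{\phi_{\rm r}^{\rm az}},\mathbf{u}_{\phi_{\rm r}^{\rm el}}$ are mutually orthogonal (as observed after Theorem~\ref{T0}) with strictly positive intensities under the stated far-field geometry, so $\mathbf{F}_{\rm e}^{[1:1]}(\mathbf{r})\succ\mathbf{0}$; by Step~1, $\mathbf{F}_{\rm e}^{[1:K]}(\mathbf{r})\succcurlyeq\mathbf{F}_{\rm e}^{[1:1]}(\mathbf{r})\succ\mathbf{0}$ for every $K\ge 1$, so all the EFIMs involved are positive definite and invertible. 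Applying operator-antitonicity of inversion to $\mathbf{F}_{\rm e}^{[1:K+1]}(\mathbf{r})\succcurlyeq\mathbf{F}_{\rm e}^{[1:K]}(\mathbf{r})\succ\mathbf{0}$ yields $\big(\mathbf{F}_{\rm e}^{[1:K]}(\mathbf{r})\big)^{-1}\succcurlyeq\big(\mathbf{F}_{\rm e}^{[1:K+1]}(\mathbf{r})\big)^{-1}$. Finally, since $\Tr(\mathbf{X}-\mathbf{Y})\ge 0$ for any positive semidefinite $\mathbf{X}-\mathbf{Y}$, taking traces and using the CRLB definition \eqref{CRLB} (with $\mathbf{F}(\mathbf{r})=\mathbf{F}_{\rm e}(\mathbf{r})$ as identified in Theorem~\ref{T0}) gives ${\rm CRLB}^{[1:K+1]}(\mathbf{r})=\Tr\big((\mathbf{F}_{\rm e}^{[1:K+1]})^{-1}\big)\le\Tr\big((\mathbf{F}_{\rm e}^{[1:K]})^{-1}\big)={\rm CRLB}^{[1:K]}(\mathbf{r})$, as claimed.

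\textbf{Main obstacle.} I expect the only genuinely nontrivial ingredient is the PSD comparison in Step~1, which hinges on the non-negativity of the angular information intensities $\lambda_{\phi_{\rm r}^{\rm az}}-\chi_{\phi_{\rm r}^{\rm az}\phi_{\rm r}^{\rm el}}$ and $\lambda_{\phi_{\rm r}^{\rm el}}-\chi_{\phi_{\rm r}^{\rm el}\phi_{\rm r}^{\rm az}}$. As sketched for Appendix~\ref{A3}, this reduces to verifying that the product of \eqref{fim-az2} and \eqref{fim-el2} dominates the square of \eqref{fim-azel2}, i.e., a Cauchy--Schwarz / non-negative-discriminant statement for the $2\times2$ angular block of $\mathbf{F}_{\rm A}$; one can also argue it structurally, since that block is itself a Gram-type matrix and hence positive semidefinite. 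A secondary but minor point is guaranteeing strict positive-definiteness of the single-receiver EFIM so the inverses are well defined, which follows at once from the orthogonality of the three information axes together with positivity of the intensities. Everything else is a routine invocation of Loewner-order monotonicity.
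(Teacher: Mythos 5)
Your proposal is correct and follows essentially the same route as the paper's own proof: express $\mathbf{F}_{\rm e}^{[1:K+1]}(\mathbf{r})$ via \eqref{EFIM44} as the $K$-receiver EFIM plus a positive semidefinite rank-three contribution from the new node, conclude $\mathbf{F}_{\rm e}^{[1:K+1]}(\mathbf{r})\succcurlyeq\mathbf{F}_{\rm e}^{[1:K]}(\mathbf{r})$, and then use antitonicity of matrix inversion and trace monotonicity together with \eqref{CRLB}. The only difference is that you spell out the non-negativity of the intensities (Appendix~\ref{A3}) and the positive-definiteness needed for the inverses, points the paper leaves implicit.
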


\begin{proof}
According to \eqref{EFIM44}, we have 
\begin{equation} \label{EFIM35}
	\mathbf{F}_{\rm e }^{[1:K+1]}(\mathbf{r}) \!=\!\!\! \sum_{k=1}^{K+1}\!\! \mathbf{U}^k 	\!\!
	\left[\!\!\begin{array}{ccc}
	\!	\lambda_{\tau_{\rm r}}^{k} \!\!& 0                                  & 0  \\		
		0                          & \!\!\lambda_{\phi_{{\rm r}}^{\rm az}}^{k}\!-\!\chi_{\phi_{\rm r}^{\rm az}\phi_{\rm r}^{\rm el}}^{k}  \!\!  & 0  \\
		0                          & 0                                  &\!\! \lambda^{k}_{\phi_{{\rm r}}^{\rm el}}\!-\!\chi_{\phi_{\rm r}^{\rm el}\phi_{\rm r}^{\rm az}}^{k}\!\!\\
	\end{array}\!\!\right] \!\!\mathbf{U}^{k{\rm H}}, 	
\end{equation}
where $\mathbf{U}^k = [\mathbf{u}_{\tau_{\rm r}}^k,\mathbf{u}_{\phi_{\rm r}^{\rm az}}^k,\mathbf{u}_{\phi_{\rm r}^{\rm el}}^k]$.
Since each summation matrix in \eqref{EFIM35} is positive semi-definite, it follows that
\begin{equation} \label{EFIM45}
	\mathbf{F}_{\rm e }^{[1:K+1]}(\mathbf{r}) \succcurlyeq 	\mathbf{F}_{\rm e }^{[1:K]}(\mathbf{r}). 	
\end{equation}
Therefore, we have
\begin{equation} \label{EFIM65}
	\big[\mathbf{F}_{\rm e }^{[1:K+1]}(\mathbf{r})\big]^{-1} \preccurlyeq	\big[\mathbf{F}_{\rm e }^{[1:K]}(\mathbf{r})\big]^{-1}. 		
\end{equation}
Taking the trace on both sides yields
\begin{equation} \label{EFIM75}
		\Tr\big(\mathbf{F}_{\rm e }^{[1:K+1]}(\mathbf{r})\big)^{-1} \leq \Tr\big(\mathbf{F}_{\rm e }^{[1:K]}(\mathbf{r})\big)^{-1}.
\end{equation}
Recalling that ${\rm CRLB}^{[1:K+1]}(\mathbf{r}) =\Tr\big(\mathbf{F}_{\rm e }^{[1:K+1]}(\mathbf{r})\big)^{-1}$ and $ {\rm CRLB}^{[1:K]}(\mathbf{r})=\Tr\big(\mathbf{F}_{\rm e }^{[1:K]}(\mathbf{r})\big)^{-1}$, the result follows.
\end{proof}

\subsection{Increasing Receivers with Self-Positioning Errors} 

In cellular networks, mobile users can also serve as cooperative receivers.\footnote{In practice, the RIS typically lacks knowledge of the UE's position, and the UE can only engage in cooperative sensing by opportunistically receiving reflected signals from the RIS during movement. To increase the chances of such reflections, the RIS can use random phase configurations or beam-scanning strategies for spatial diversity. Since SHM does not demand real-time sensing, the UE can accumulate these signals over time to achieve cooperative sensing gains.} However, their mobility introduces position uncertainties that must be taken into account. Moreover, even when the receiver is a fixed base station, external environmental disturbances (such as wind or structural vibrations) may cause slight movements or vibrations of the antenna array, which also lead to self-positioning errors. In this subsection, we analyze how such self-positioning errors of the receivers affect the sensing accuracy.

We begin by considering a scenario with a single receiver at a specific time instant, where the receiver's position is subject to errors.
Let $\mathbf{b}_{\rm R}={[x_{\rm R},y_{\rm R},z_{\rm R}]}^{\rm T}$ denote the true position of the cooperative receiver. The measured position, $\hat{\mathbf{b}}_{\rm R}$, is modeled as 
\begin{equation} \label{tr}
	\hat{\mathbf{b}}_{\rm R}=\mathbf{b}_{\rm R}+ \boldsymbol{\epsilon}_{\rm R},
\end{equation}
where $\boldsymbol{\epsilon}_{\rm R}$ represents the position error, which follows a Gaussian distribution with a mean of zero and a diagonal covariance matrix $\mathbf{Q}_{\rm R}$.
Let $\mathbf{J}_{\rm R} = {\partial\boldsymbol{\eta}^{\rm T}}/{\partial \mathbf{b}_{\rm R}}$. Noting the relationship between the derivatives, we have $\mathbf{J}_{\rm R} = -\mathbf{J}$. Consequently, the FIM for the system that accounts for the receiver position error is given by 
\begin{align} \label{FIM9}
	\mathbf{F}(\mathbf{r}, \mathbf{b}_{\rm R})  &=\left[\begin{array}{cc}
		\mathbf{J} \mathbf{F}(\boldsymbol{\eta})  \mathbf{J}^{\rm H}  &  \mathbf{J} \mathbf{F}(\boldsymbol{\eta}) \mathbf{J}_{\rm R}^{\rm H}  \\		
		\mathbf{J}_{\rm R} \mathbf{F}(\boldsymbol{\eta}) \mathbf{J}^{\rm H}              & \mathbf{J}_{\rm R} \mathbf{F}(\boldsymbol{\eta}) \mathbf{J}_{\rm R}^{\rm H}+\mathbf{Q}_{\rm R}^{-1}\\
	\end{array}\right].
\end{align}
Let $\mathbf{U} = [\mathbf{u}_{\tau_{\rm r}}, \mathbf{u}_{\phi_{\rm r}^{\rm az}}, \mathbf{u}_{\phi_{\rm r}^{\rm el}}]$. By performing a unitary transformation on $ \mathbf{Q}_{\rm R}^{-1}$, we can write $\mathbf{Q}_{\rm R}^{-1} = \mathbf{U}\tilde{\mathbf{Q}}_{\rm R}^{-1}\mathbf{U}^{\rm H} $, where 
\begin{align}\label{tildeQ}
	\tilde{\mathbf{Q}}_{\rm R}^{-1} = \left[\begin{array}{ccc}
		\epsilon_{\tau}^{-1}  & 0                                  & 0  \\		
		0                      & \epsilon_{\phi^{\rm az}}^{-1}    & 0  \\
		0                      & 0                                  & \epsilon_{\phi^{\rm el}}^{-1} \\
	\end{array}\right].
\end{align}  
Here, $\epsilon_{\tau}^{-1}$, $\epsilon_{\phi^{\rm az}}^{-1}$, and $\epsilon_{\phi^{\rm el}}^{-1}$ represent the information intensities along the directions of $\mathbf{u}_{\tau_{\rm r}}$, $\mathbf{u}_{\phi_{\rm r}^{\rm az}}$, and $\mathbf{u}_{\phi_{\rm r}^{\rm el}}$, respectively, as shown in Fig. \ref{ellips}(b). 
Based on these developments, we state the following theorem.
\begin{theorem}\label{T1}
The EFIM for 3D RIS localization in the presence of receiver self-positioning errors is given by
\begin{align} \label{EFIM55} 
	\tilde{\mathbf{F}}_{\rm e }(\mathbf{r})& = \frac{\lambda_{\tau_{\rm r}}\epsilon_{\tau}^{-1}}{\lambda_{\tau_{\rm r}}+\epsilon_{\tau}^{-1}}  \mathbf{u}_{\tau_{\rm r}}\mathbf{u}_{\tau_{\rm r}}^{\rm H} \nonumber\\
	& ~~+ \frac{(\lambda_{\phi_{\rm r}^{\rm az}}-\chi_{\phi_{\rm r}^{\rm az}\phi_{\rm r}^{\rm el}})\epsilon_{\phi^{\rm az}}^{-1}}{\lambda_{\phi_{\rm r}^{\rm az}}-\chi_{\phi_{\rm r}^{\rm az}\phi_{\rm r}^{\rm el}}+\epsilon_{\phi^{\rm az}}^{-1}} \mathbf{u}_{\phi_{\rm r}^{\rm az}}\mathbf{u}_{\phi_{\rm r}^{\rm az}}^{\rm H} \nonumber\\
&~~ + \frac{(\lambda_{\phi_{\rm r}^{\rm el}}-\chi_{\phi_{\rm r}^{\rm el}\phi_{\rm r}^{\rm az}})\epsilon_{\phi^{\rm el}}^{-1}}{\lambda_{\phi_{\rm r}^{\rm el}}-\chi_{\phi_{\rm r}^{\rm el}\phi_{\rm r}^{\rm az}}+\epsilon_{\phi^{\rm el}}^{-1}}\mathbf{u}_{\phi_{\rm r}^{\rm el}}\mathbf{u}_{\phi_{\rm r}^{\rm el}}^{\rm H}.
\end{align}
\end{theorem}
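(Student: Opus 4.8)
The plan is to obtain $\tilde{\mathbf{F}}_{\rm e}(\mathbf{r})$ as the equivalent Fisher information for $\mathbf{r}$ extracted from the joint FIM \eqref{FIM9} by treating the receiver position $\mathbf{b}_{\rm R}$ as a nuisance parameter, i.e.\ as the Schur complement of the lower-right block, and then to diagonalize the result in the orthonormal frame $\mathbf{U}=[\mathbf{u}_{\tau_{\rm r}},\mathbf{u}_{\phi_{\rm r}^{\rm az}},\mathbf{u}_{\phi_{\rm r}^{\rm el}}]$ supplied by Theorem~\ref{T0}. First I would substitute $\mathbf{J}_{\rm R}=-\mathbf{J}$ into \eqref{FIM9}; writing $\mathbf{F}(\mathbf{r})=\mathbf{J}\mathbf{F}(\boldsymbol{\eta})\mathbf{J}^{\rm H}$ for the localization FIM and using its symmetry, the four blocks collapse to $\mathbf{F}(\mathbf{r})$, $-\mathbf{F}(\mathbf{r})$, $-\mathbf{F}(\mathbf{r})$ and $\mathbf{F}(\mathbf{r})+\mathbf{Q}_{\rm R}^{-1}$, so
\begin{equation}
\tilde{\mathbf{F}}_{\rm e}(\mathbf{r})=\mathbf{F}(\mathbf{r})-\mathbf{F}(\mathbf{r})\big(\mathbf{F}(\mathbf{r})+\mathbf{Q}_{\rm R}^{-1}\big)^{-1}\mathbf{F}(\mathbf{r}),
\end{equation}
the inverse being well defined because $\mathbf{Q}_{\rm R}^{-1}$ is positive definite.

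The second step exploits that both matrices in sight are diagonalized by the \emph{same} orthogonal $\mathbf{U}$. By Theorem~\ref{T0} the three direction vectors are orthonormal, hence $\mathbf{F}(\mathbf{r})=\mathbf{U}\boldsymbol{\Lambda}\mathbf{U}^{\rm H}$ with $\boldsymbol{\Lambda}=\diag\!\big(\lambda_{\tau_{\rm r}},\,\lambda_{\phi_{\rm r}^{\rm az}}-\chi_{\phi_{\rm r}^{\rm az}\phi_{\rm r}^{\rm el}},\,\lambda_{\phi_{\rm r}^{\rm el}}-\chi_{\phi_{\rm r}^{\rm el}\phi_{\rm r}^{\rm az}}\big)$, and by the unitary decomposition $\mathbf{Q}_{\rm R}^{-1}=\mathbf{U}\tilde{\mathbf{Q}}_{\rm R}^{-1}\mathbf{U}^{\rm H}$ with $\tilde{\mathbf{Q}}_{\rm R}^{-1}$ the diagonal matrix \eqref{tildeQ}, the sum $\mathbf{F}(\mathbf{r})+\mathbf{Q}_{\rm R}^{-1}=\mathbf{U}\big(\boldsymbol{\Lambda}+\tilde{\mathbf{Q}}_{\rm R}^{-1}\big)\mathbf{U}^{\rm H}$. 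Conjugating the Schur complement by $\mathbf{U}^{\rm H}$ then reduces it to a diagonal matrix whose $k$-th entry is the scalar $\lambda_k-\lambda_k^2/(\lambda_k+\epsilon_k^{-1})$, with $(\lambda_k)$ the diagonal entries of $\boldsymbol{\Lambda}$ and $(\epsilon_k^{-1})$ those of $\tilde{\mathbf{Q}}_{\rm R}^{-1}$. Applying the scalar identity $\lambda_k-\lambda_k^2/(\lambda_k+\epsilon_k^{-1})=\lambda_k\epsilon_k^{-1}/(\lambda_k+\epsilon_k^{-1})$ term by term and re-expanding $\mathbf{U}(\cdot)\mathbf{U}^{\rm H}$ as a sum of rank-one matrices $(\cdot)\mathbf{u}_k\mathbf{u}_k^{\rm H}$ yields \eqref{EFIM55} exactly. (When $\mathbf{F}(\mathbf{r})$ is nonsingular one may alternatively invoke a standard Woodbury-type identity to write $\tilde{\mathbf{F}}_{\rm e}(\mathbf{r})=\big(\mathbf{F}(\mathbf{r})^{-1}+\mathbf{Q}_{\rm R}\big)^{-1}$, which displays the result as the ``parallel'' combination of the geometric information with the prior positioning information.)

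I expect the only genuinely delicate point to be the legitimacy of treating $\tilde{\mathbf{Q}}_{\rm R}^{-1}$ as diagonal in the frame $\mathbf{U}$: $\mathbf{Q}_{\rm R}$ is postulated diagonal in Cartesian coordinates while $\mathbf{U}$ is a generic rotation, so $\mathbf{U}^{\rm H}\mathbf{Q}_{\rm R}^{-1}\mathbf{U}$ is exactly diagonal only under the standing modelling assumption attached to \eqref{tildeQ} (for instance an isotropic positioning error, or the convention of retaining only the on-axis components). I would state this explicitly; if a fully general $\mathbf{Q}_{\rm R}$ is wanted, the Schur-complement expression $\mathbf{F}(\mathbf{r})-\mathbf{F}(\mathbf{r})(\mathbf{F}(\mathbf{r})+\mathbf{Q}_{\rm R}^{-1})^{-1}\mathbf{F}(\mathbf{r})$ still holds verbatim and only its clean rank-one decomposition along $\mathbf{u}_{\tau_{\rm r}},\mathbf{u}_{\phi_{\rm r}^{\rm az}},\mathbf{u}_{\phi_{\rm r}^{\rm el}}$ is lost. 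Everything else is a one-step Schur complement plus a scalar identity, so no substantive obstacle is anticipated.
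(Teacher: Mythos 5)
Your proposal follows essentially the same route as the paper's Appendix proof: eliminate $\mathbf{b}_{\rm R}$ via the Schur complement of \eqref{FIM9} using $\mathbf{J}_{\rm R}=-\mathbf{J}$, express both the localization information and $\mathbf{Q}_{\rm R}^{-1}=\mathbf{U}\tilde{\mathbf{Q}}_{\rm R}^{-1}\mathbf{U}^{\rm H}$ in the frame $\mathbf{U}$ with the diagonal intensities from Theorem~\ref{T0}, and apply the scalar identity $\lambda-\lambda^2/(\lambda+\epsilon^{-1})=\lambda\epsilon^{-1}/(\lambda+\epsilon^{-1})$ entrywise, exactly as in \eqref{EFIM14}--\eqref{EFIM5}. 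Your added remarks (the Woodbury/parallel-combination view and the caveat that $\tilde{\mathbf{Q}}_{\rm R}^{-1}$ being diagonal in the $\mathbf{U}$ frame is a modelling assumption rather than a consequence of $\mathbf{Q}_{\rm R}$ being diagonal in Cartesian coordinates) are sound and, if anything, make explicit a step the paper takes for granted.
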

The proof is provided in Appendix \ref{A2}.

Theorem \ref{T1} reveals that when a cooperative receiver exhibits large position errors, i.e.,  $\epsilon_{\tau}^{-1}$, $\epsilon_{\phi^{\rm az}}^{-1}$, $\epsilon_{\phi^{\rm el}}^{-1} \rightarrow 0$, the EFIM reduces to $\tilde{\mathbf{F}}_{\rm e }(\mathbf{r})=\mathbf{0}$. This implies that the cooperative receiver provides no useful information.
Building upon the results from Section~\ref{mr}, we state the following corollary.
\begin{corollary}\label{CO2}	
	Introducing an additional receiver with limited position uncertainty can still reduce the CRLB. Furthermore, the CRLB satisfies the following inequality:
	\begin{multline} \label{eq:th4} 		
		{\rm CRLB}^{[1:K+1]}(\mathbf{r})  \\
		\overset{(a)}{\leq}  {\rm CRLB}^{[1:K+1]}(\mathbf{r},\mathbf{b}_{\rm R}) \\
		 \overset{(b)}{\leq} {\rm CRLB}^{[1:K]}(\mathbf{r}),
	\end{multline}
	where
	\begin{equation}
	{\rm CRLB}^{[1:K+1]}(\mathbf{r},\mathbf{b}_{\rm R})= \Tr\Big(\big(\mathbf{F}_{\rm e }^{[1:K]}(\mathbf{r})+\tilde{\mathbf{F}}_{\rm e }^{[K+1]}(\mathbf{r})\big)^{-1}\Big),
    \end{equation}
    and $\tilde{\mathbf{F}}_{\rm e }^{[K+1]}(\mathbf{r})$ is obtained from \eqref{EFIM55} by replacing the parameters $ \lambda_{\tau_{\rm r}} $, $\lambda_{\phi_{\rm r}^{\rm az}}$, $\chi_{\phi_{\rm r}^{\rm az}\phi_{\rm r}^{\rm el}}$, $\lambda_{\phi_{\rm r}^{\rm el}}$, and $\chi_{\phi_{\rm r}^{\rm el}\phi_{\rm r}^{\rm az}}$ with their respective values for the ${(K+1)}$-th receiver.  ${\rm CRLB}^{[1:K+1]}(\mathbf{r})$ and ${\rm CRLB}^{[1:K]}(\mathbf{r})$ represent the CRLB obtained from the cooperation of ${K+1}$ and $K$ receivers with perfectly known positions, respectively. 
\end{corollary}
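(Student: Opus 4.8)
The plan is to collapse the whole chain to two positive-semidefinite matrix comparisons and then invoke the same trace-of-inverse monotonicity already used in the proof of Corollary~\ref{C2}. First I would express all three quantities in a common form. By the additive structure of \eqref{EFIM44}, $\mathbf{F}_{\rm e}^{[1:K+1]}(\mathbf{r}) = \mathbf{F}_{\rm e}^{[1:K]}(\mathbf{r}) + \mathbf{F}_{\rm e}^{[K+1]}(\mathbf{r})$, where $\mathbf{F}_{\rm e}^{[K+1]}(\mathbf{r})$ is the single-receiver EFIM \eqref{THE1} evaluated with the geometry and SNR of the $(K+1)$-th receiver. Hence ${\rm CRLB}^{[1:K+1]}(\mathbf{r}) = \Tr\big((\mathbf{F}_{\rm e}^{[1:K]}(\mathbf{r}) + \mathbf{F}_{\rm e}^{[K+1]}(\mathbf{r}))^{-1}\big)$, the middle term is $\Tr\big((\mathbf{F}_{\rm e}^{[1:K]}(\mathbf{r}) + \tilde{\mathbf{F}}_{\rm e}^{[K+1]}(\mathbf{r}))^{-1}\big)$ by definition, and ${\rm CRLB}^{[1:K]}(\mathbf{r}) = \Tr\big(\mathbf{F}_{\rm e}^{[1:K]}(\mathbf{r})^{-1}\big)$. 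Thus it suffices to prove the chain $\mathbf{0} \preccurlyeq \tilde{\mathbf{F}}_{\rm e}^{[K+1]}(\mathbf{r}) \preccurlyeq \mathbf{F}_{\rm e}^{[K+1]}(\mathbf{r})$: adding $\mathbf{F}_{\rm e}^{[1:K]}(\mathbf{r})$ throughout, inverting (which reverses the order), and taking traces then yields (b) from the left inequality and (a) from the right one.

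For the left inequality I would write $\tilde{\mathbf{F}}_{\rm e}^{[K+1]}(\mathbf{r})$ from \eqref{EFIM55} as $\sum_{j=1}^{3} c_j\, \mathbf{u}_j \mathbf{u}_j^{\rm H}$ with $c_j = \lambda_j \epsilon_j^{-1}/(\lambda_j + \epsilon_j^{-1})$, where $(\lambda_1,\lambda_2,\lambda_3) = \big(\lambda_{\tau_{\rm r}},\, \lambda_{\phi_{\rm r}^{\rm az}} - \chi_{\phi_{\rm r}^{\rm az}\phi_{\rm r}^{\rm el}},\, \lambda_{\phi_{\rm r}^{\rm el}} - \chi_{\phi_{\rm r}^{\rm el}\phi_{\rm r}^{\rm az}}\big)$ and $(\mathbf{u}_1,\mathbf{u}_2,\mathbf{u}_3) = \big(\mathbf{u}_{\tau_{\rm r}},\mathbf{u}_{\phi_{\rm r}^{\rm az}},\mathbf{u}_{\phi_{\rm r}^{\rm el}}\big)$, all associated with the $(K+1)$-th receiver. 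Each $\lambda_j \ge 0$ by the non-negativity established below Theorem~\ref{T0} (Appendix~\ref{A3}), and each $\epsilon_j^{-1} \ge 0$ since these are diagonal entries of the inverse covariance $\tilde{\mathbf{Q}}_{\rm R}^{-1}$; therefore $c_j \ge 0$ and $\tilde{\mathbf{F}}_{\rm e}^{[K+1]}(\mathbf{r}) \succcurlyeq \mathbf{0}$, which gives (b). If in addition the position uncertainty is finite ($\mathbf{Q}_{\rm R} \succ \mathbf{0}$, so $\epsilon_j^{-1} > 0$) and the receiver geometry makes some $\lambda_j > 0$, then the corresponding $c_j > 0$, the PSD gap is nonzero, and the trace of the inverse strictly decreases — this is the ``can still reduce the CRLB'' assertion.

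For the right inequality I would compare \eqref{THE1} and \eqref{EFIM55} directly. Since $\mathbf{F}_{\rm e}^{[K+1]}(\mathbf{r}) = \sum_{j=1}^{3} \lambda_j \mathbf{u}_j \mathbf{u}_j^{\rm H}$ and $\tilde{\mathbf{F}}_{\rm e}^{[K+1]}(\mathbf{r}) = \sum_{j=1}^{3} c_j \mathbf{u}_j \mathbf{u}_j^{\rm H}$ are written in the \emph{same} rank-one frame — the mutually orthogonal axes $\mathbf{u}_{\tau_{\rm r}},\mathbf{u}_{\phi_{\rm r}^{\rm az}},\mathbf{u}_{\phi_{\rm r}^{\rm el}}$ of the $(K+1)$-th receiver's information ellipsoid — their difference equals $\sum_{j=1}^{3} (\lambda_j - c_j)\, \mathbf{u}_j \mathbf{u}_j^{\rm H}$, and the elementary inequality $\lambda_j \epsilon_j^{-1}/(\lambda_j + \epsilon_j^{-1}) \le \lambda_j$ (which is just $\lambda_j^2 \ge 0$) makes every coefficient $\lambda_j - c_j$ non-negative. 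Hence $\mathbf{F}_{\rm e}^{[K+1]}(\mathbf{r}) \succcurlyeq \tilde{\mathbf{F}}_{\rm e}^{[K+1]}(\mathbf{r})$, which is (a); intuitively, a receiver whose position is exactly known cannot be less informative than the same receiver with position error, and indeed letting $\epsilon_j^{-1} \to \infty$ in \eqref{EFIM55} recovers \eqref{THE1}.

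The whole argument is essentially bookkeeping once the common decomposition $\mathbf{F}_{\rm e}^{[1:K]}(\mathbf{r}) + (\cdot)$ is exposed, so I do not anticipate a genuine obstacle; the only point needing care is the standing assumption that $\mathbf{F}_{\rm e}^{[1:K]}(\mathbf{r}) \succ \mathbf{0}$ — equivalently, the $K$-receiver subnetwork already supplies information along three linearly independent directions in $\mathbb{R}^3$ — which is what makes all three CRLBs finite and licenses the monotonicity step. I would also state explicitly that $\mathbf{F}_{\rm e}^{[K+1]}(\mathbf{r})$ and $\tilde{\mathbf{F}}_{\rm e}^{[K+1]}(\mathbf{r})$ are diagonal in the \emph{same} orthogonal frame, since that is precisely what reduces the two matrix comparisons to the scalar inequalities $0 \le c_j \le \lambda_j$.
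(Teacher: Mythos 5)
Your proposal is correct and follows essentially the same route as the paper: the paper's proof simply asserts the sandwich $\mathbf{0} \preccurlyeq \tilde{\mathbf{F}}_{\rm e}^{[K+1]}(\mathbf{r}) \preccurlyeq \mathbf{F}_{\rm e}^{[K+1]}(\mathbf{r})$ from \eqref{EFIM55} and lets (a) and (b) ``follow directly,'' which is exactly your argument with the details (the common orthogonal frame, the scalar bounds $0 \le \lambda_j\epsilon_j^{-1}/(\lambda_j+\epsilon_j^{-1}) \le \lambda_j$, and the trace-of-inverse monotonicity already used for Corollary~\ref{C2}) written out. Your explicit remark on needing $\mathbf{F}_{\rm e}^{[1:K]}(\mathbf{r}) \succ \mathbf{0}$ for the inverses to exist is a reasonable addition the paper leaves implicit.
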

\begin{proof}
	According to \eqref{EFIM55}, we can infer that $\mathbf{0}  \preccurlyeq \tilde{\mathbf{F}}_{\rm e }^{[K+1]}(\mathbf{r}) \preccurlyeq {\mathbf{F}}_{\rm e }^{[K+1]}(\mathbf{r})$. Thus, inequalities (a) and (b) follow directly.
\end{proof}

Corollary~\ref{CO2} indicates that a cooperative receiver with large positioning errors does not degrade the overall Fisher information of the network.

\begin{figure}
	\centering
	\includegraphics[width=0.42\textwidth]{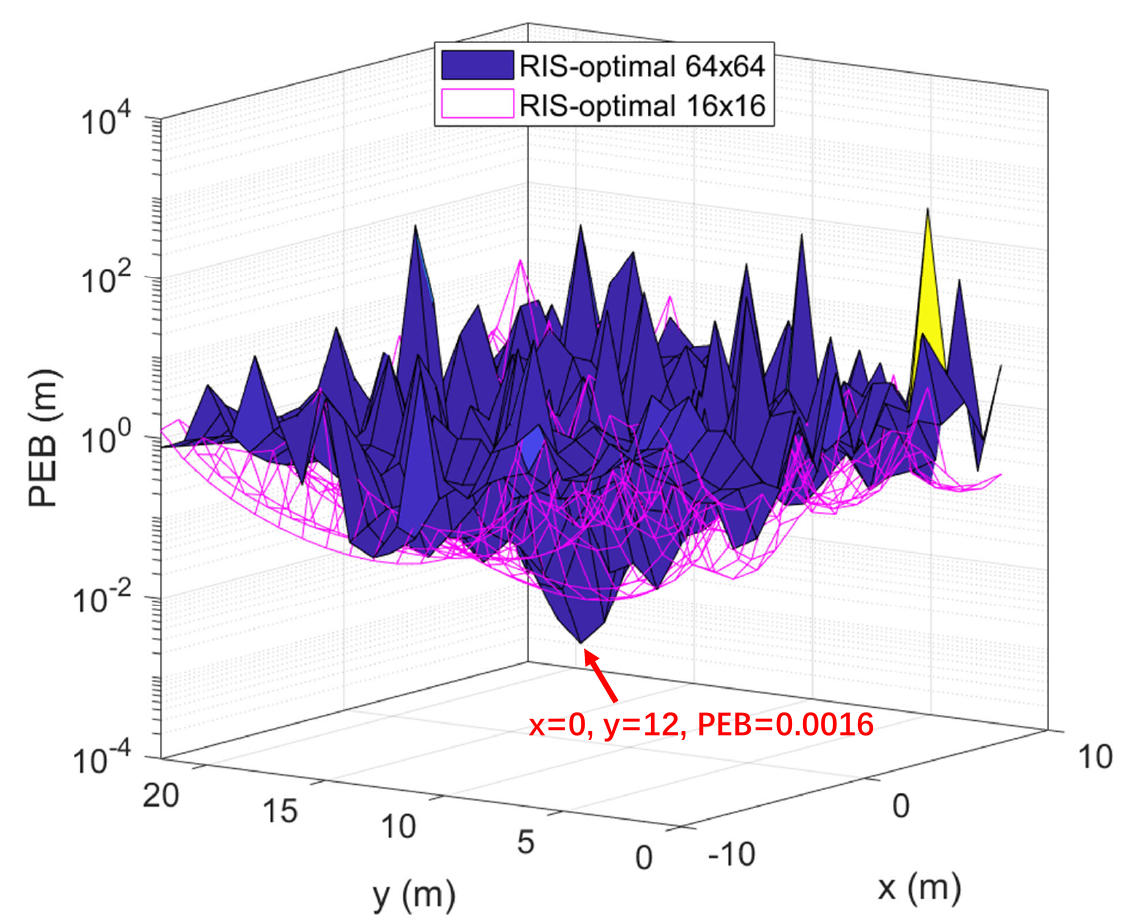}
	\caption{PEB variations with respect to receiver position changes.}
	\label{fig3}
\end{figure}
 
\begin{figure}
	\centering
	\includegraphics[width=0.495\textwidth]{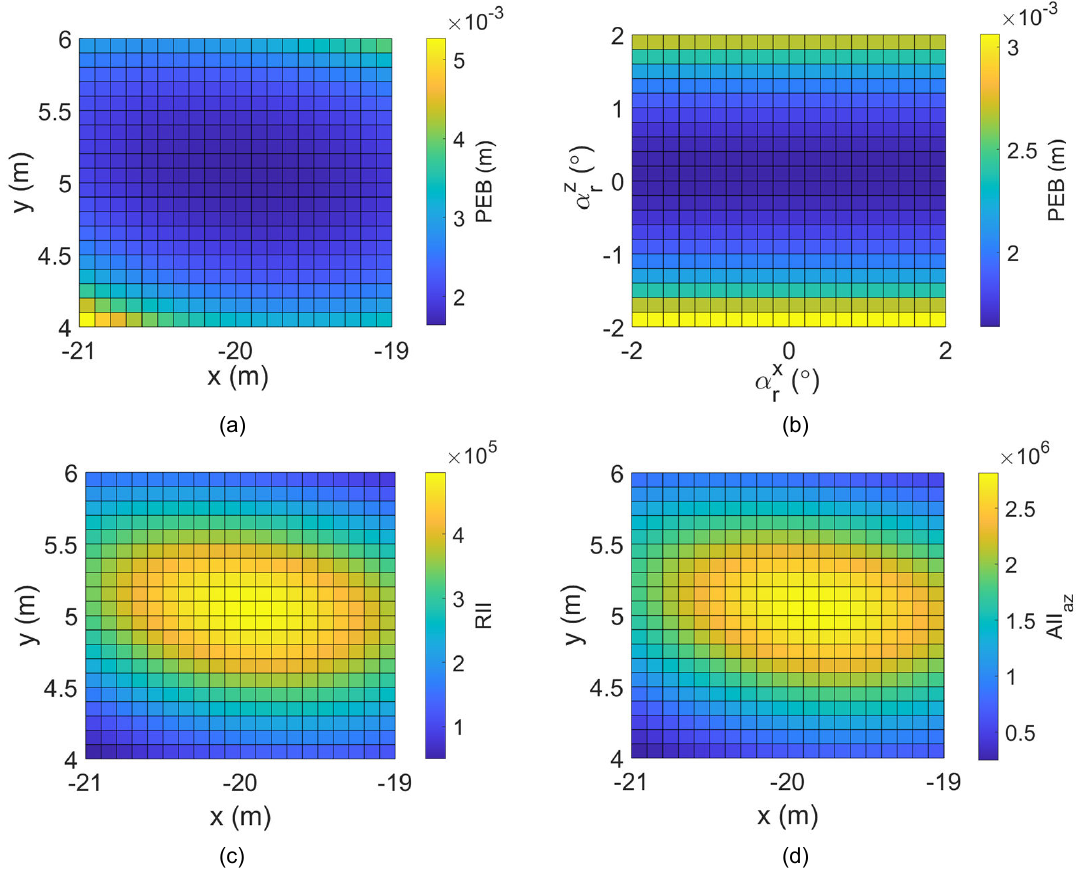}
	\caption{PEB and information intensity variations with respect to changes in RIS positions. AII$_{\rm el}$ shares the same trend as AII$_{\rm az}$ and is thus omitted.}
	\label{fig4}
\end{figure}
 
\begin{figure*}
	\hspace{0.95cm}
	\includegraphics[width=0.92\textwidth]{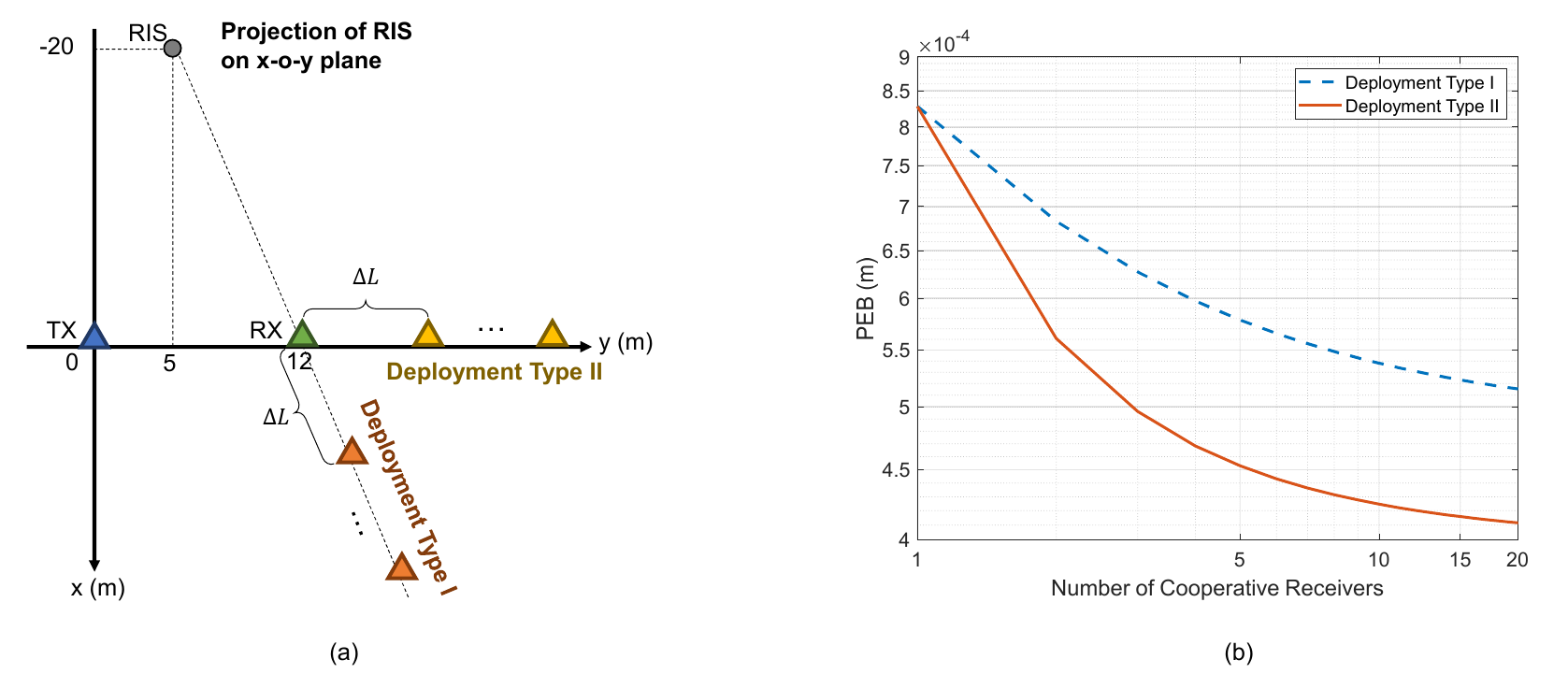}
	\caption{PEB performance in spatial cooperation. (a) Illustration of Deployment Types I and II. (b)Relationship between PEB and the number of cooperative receivers under different deployment strategies.  }
	\label{fig5}
\end{figure*}

\section{Numerical Results}\label{sec-SIM}

\begin{figure}
	\centering
	\includegraphics[width=0.42\textwidth]{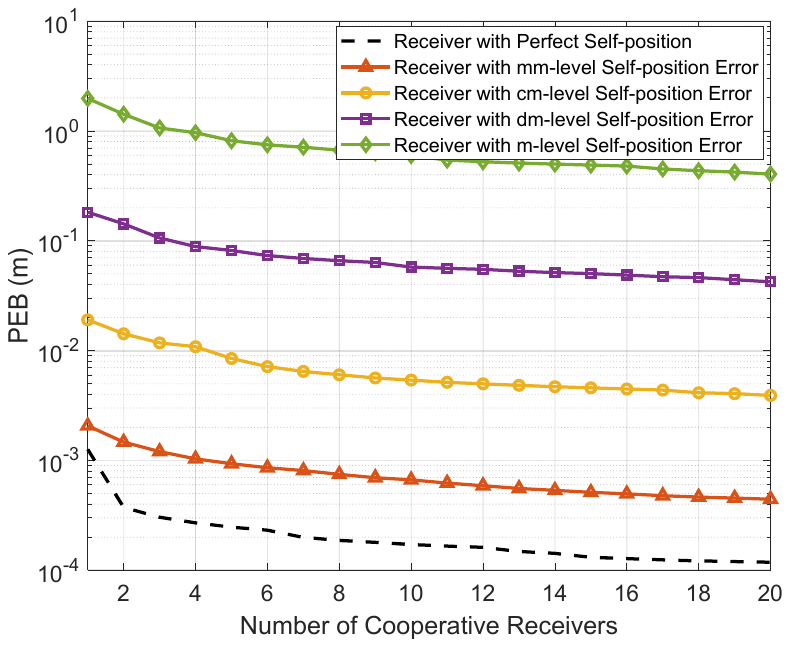}
	\caption{The impact of cooperative receiver self-positioning error on PEB performance, where m, dm, cm, and mm are abbreviations for meter, decimeter, centimeter, and millimeter, respectively.}
	\label{fig6}
\end{figure}
\begin{figure*}
	\center
	\includegraphics[width=1\textwidth]{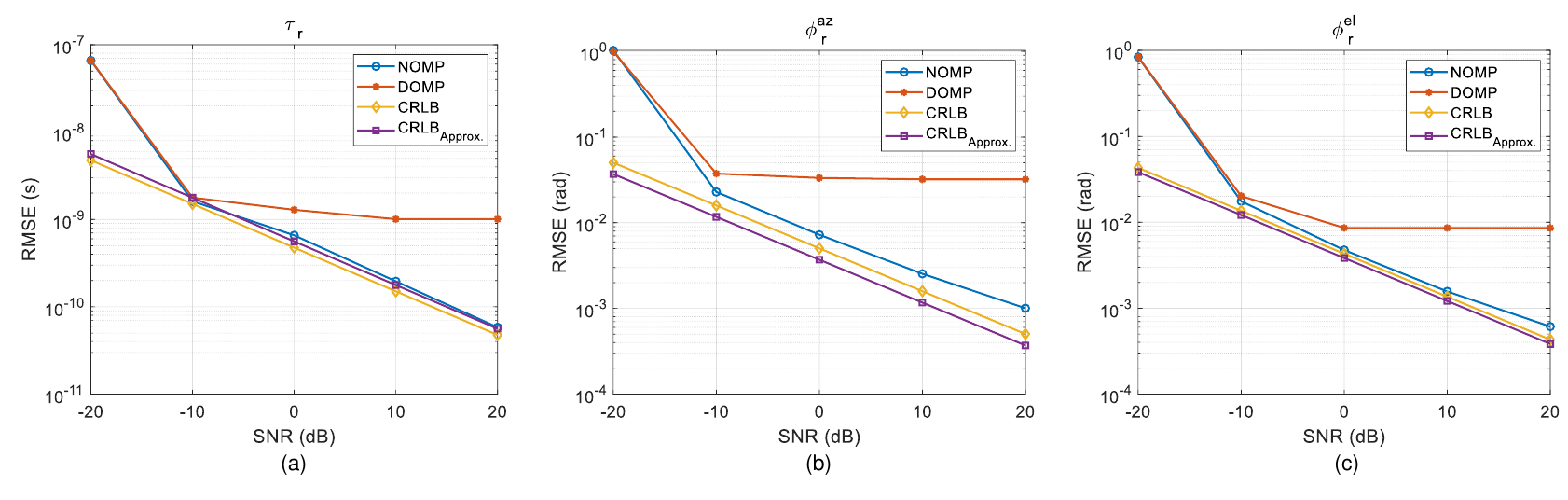} 
	\caption{RMSE of the delay, azimuth AOA, and elevation AOA estimation as a function of SNR. The estimators of DOMP and NOMP, and CRLB are considered.}
	\label{NOMP}
\end{figure*}

\subsection{Comparison of RII and AII Characteristics}
In this subsection, we analyze the similarities and differences between RII and AII. The azimuth and elevation elements of both the transmitter and receiver, $N_{\rm T}^{\rm az}$, $N_{\rm T}^{\rm el}$, $N_{\rm R}^{\rm az}$, and $N_{\rm R}^{\rm el}$, are set to $4$. The RIS size $M$ is set to $64 \times 64$, the subcarrier spacing is 120 kHz, and the SNR is 20 dB.
The coordinates are set as follows: $\mathbf{b}_{\rm T}=[0,0,0]^{\rm T}$ meters (m), $\mathbf{b}_{\rm R}={\rm scale}\times[0,12,0]^{\rm T}$ m, $\mathbf{r}={\rm scale}\times[-20,5,10]^{\rm T}$ m, and $\boldsymbol{\alpha}_{\rm r}=[0^{\circ},0^{\circ}]$. Unless otherwise stated, the observation time instants $T$  is set to 2. The value of ${\rm scale}$ ranges from 10 to 100, and the number of subcarriers is set to 1, 10, 100, and 1000.
As shown in Fig. \ref{fig1}(a), RII remains constant with respect to scene size and scales quadratically with bandwidth, whereas AII decreases as the scene size increases. Fig. \ref{fig1}(b) illustrates that the PEB, which is derived from the fusion of RII and AII, is range-limited when the number of subcarriers is small ($N_{\rm c}=1$) and converges to being angle-limited as $N_{\rm c}$ increases.

\subsection{Tolerance to Deployment Errors}
First, we analyze how PEB fluctuates with changes in receiver positions after deploying the RIS with phases optimized based on the initial coordinates: $\mathbf{b}_{\rm T}=[0,0,0]^{\rm T}$ m, $\mathbf{b}_{\rm R}=[0,12,0]^{\rm T}$ m, $\mathbf{r}=[-20,5,10]^{\rm T}$ m, and $\boldsymbol{\alpha}_{\rm r}=[0^{\circ},0^{\circ}]$. 
The number of subcarriers is set to $N_{\rm c}=100$, while other parameters remain unchanged. Fig. \ref{fig3} illustrates the variations in PEB with respect to receiver positions for RIS sizes of $M = 64 \times 64$ and $M = 16 \times 16$, respectively. The receiver's movement area is defined by $x_{\mathbf{b}_{\rm r}} \in [-10,10]$ m and $y_{\mathbf{b}_{\rm r}} \in [0,25]$ m. 
The results indicate that with an optimally configured RIS phase for specific receiver positions, millimeter-level PEB can be achieved, enabling precise monitoring of structural deformations. However, maintaining the same RIS phase configuration while altering receiver positions leads to significant degradation in both information intensity and PEB performance, particularly with larger RIS element arrays. This highlights the necessity of reconfiguring the RIS phase according to new receiver positions. Additionally, when multiple receivers cooperate, optimal RIS phase configuration through time-division multiplexing is essential.

Second, we evaluate the impact of slight variations in position and orientation of the RIS on information intensity and PEB, assuming a fixed RIS phase configuration (Fig. \ref{fig4}). The RIS size $M$ is $64 \times 64$, with all other parameters unchanged. The RIS position variations are set within the range $x_{\rm r} \in [-21,-19]$ m and $y_{\rm r} \in [4,6]$ m. 
Simulation results reveal that slight RIS position deviations (within $\pm 1$ m of the true position) do not significantly impact the PEB [Fig. \ref{fig4}(a)] and information intensity [Fig. \ref{fig4}(c) and (d)]. 
Keeping RIS position $\mathbf{r}=[-20,5,10]^{\rm T}$ m unchanged, we vary the orientation within the range $\alpha^x_{\rm r}, \alpha^z_{\rm r} \in [-2^{\circ},2^{\circ}]$. 
$\alpha^x_{\rm r}$ and $\alpha^z_{\rm r}$ represent the rotation angles around the $x$-axis and $z$-axis, respectively. It can be observed in  Fig. \ref{fig4}(b) that rotation around the $x$-axis has little effect on PEB performance, while rotation around the $z$-axis causes a slight degradation in PEB.
This suggests that the RIS phase configuration can be optimized at the time of deployment and maintained thereafter, enabling robust performance in detecting small structural deformations even when slight self-positional changes occur due to structural variations or environmental disturbances such as wind.

\subsection{PEB Performance Analysis for Cooperation}
First, we investigate the impact of different cooperative receiver deployments on PEB performance. Two deployment types, Type I and Type II, are illustrated in Fig. \ref{fig5}(a).
\begin{itemize}  
    \item Deployment Type I: The cooperative receivers are placed along the projection of the RIS onto the x-o-y plane and the line connecting  the RX.    
    \item Deployment Type II: Cooperative receivers are arranged along the positive y-axis.
\end{itemize}
In both deployment types, the receivers are spaced at a uniform distance of  $\Delta L =20$ m.

The RIS size is fixed at $M=64 \times 64$, the subcarrier number is set to $N_{\rm c}=1000$, and other parameters remain unchanged. The results indicate that for the same number of cooperative receivers, Deployment Type II outperforms Deployment Type I in terms of PEB performance [Fig. \ref{fig5}(b)]. This improvement is attributed to the enhanced spatial diversity provided by Deployment Type II. Thus, the placement of cooperative receivers is crucial for improving sensing performance, and the proposed Corollary \ref{C3} provides guidelines for optimizing their deployment.

Next, keeping the simulation parameters unchanged, we evaluate PEB performance under different levels of receiver self-positioning errors, ranging from millimeter-level to m-level (Fig. \ref{fig6}). In this simulation, the positions of cooperative receivers are randomly distributed within a circle of radius 20 m centered at $[0,0]$ on the x-o-y plane.
The black dashed line in Fig. \ref{fig6} represents the PEB performance when cooperative receiver positions are perfectly known. Under the current simulation conditions, cooperation among three receivers with millimeter-level self-positioning errors can achieve the same PEB performance as a perfectly self-positioned receiver. Furthermore, cooperation among receivers with decimeter-, centimeter-, and millimeter-level self-positioning errors can achieve sub-decimeter, sub-cm, and sub-m-level PEB performance, respectively, catering to varying deformation detection accuracy requirements. Increasing the observation time instants, the number of antenna elements, and bandwidth can further enhance PEB performance.

To validate the accuracy of the EFIM derived in the proposed theorem, we apply the NOMP algorithm \cite{NOMP} to extract delay and angle parameters from the received signal given in \eqref{y-s}, and compute the root mean square error (RMSE) of the estimates.
The corresponding CRLBs are calculated based on the approximate analytical expressions in \eqref{fim-e2}, \eqref{fim-az}, \eqref{fim-el}, and \eqref{fim-elaz}.
In the simulations, the number of receiver antennas $N_{\rm R}^{\rm az}$ and $N_{\rm R}^{\rm el}$ is set to 8.
The RIS size $M$ is set to $64 \times 64$, and the bandwidth is 30.72 MHz.
The coordinates are set as follows: $\mathbf{b}{\rm R} = [0,0,0]^{\rm T}$ m, $\mathbf{b}_{\rm T} = [20,0,0]^{\rm T}$ m, $\mathbf{r} = [15,12,12]^{\rm T}$ m, and $\boldsymbol{\alpha}_{\rm r} = [0^{\circ}, 0^{\circ}]$.
The simulation results are illustrated in Fig. \ref{NOMP}, where ``CRLB'' denotes the numerical solution and ``CRLB$_{\rm Approx.}$'' represents the derived closed-form expression. It can be observed that the delay and angle estimates obtained via the NOMP algorithm closely approach the CRLB$_{\rm Approx.}$, and the numerical and closed-form CRLBs are also in close agreement.
The consistency between the RMSEs and CRLBs confirms the correctness of the EFIM derivation.

\subsection{Detection Probability Performance for SHM}

\begin{figure}
	\centering
	\includegraphics[width=0.42\textwidth]{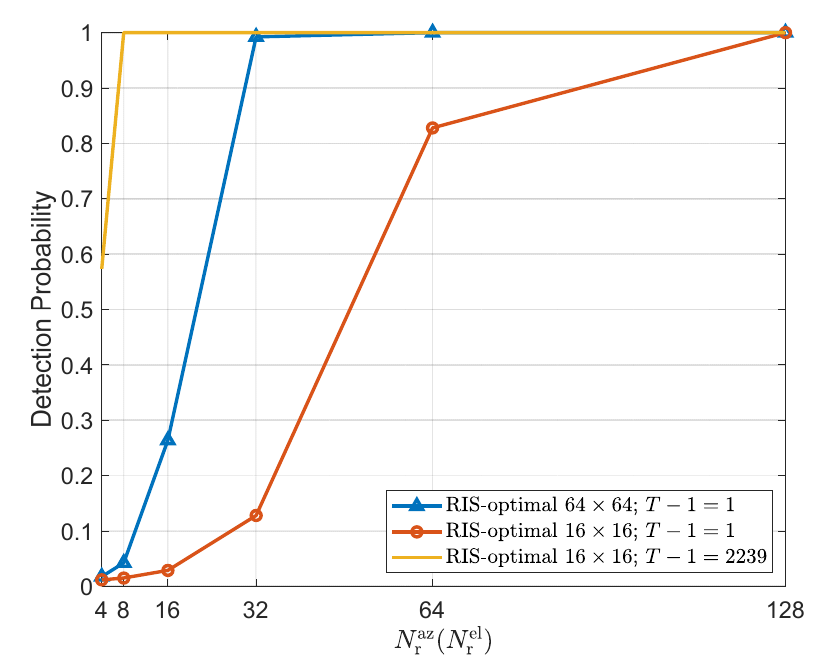} 
	\caption{Detection probability performance for millimeter-level deformations under various parameter configurations.}
	\label{fig7}
\end{figure}

In this subsection, we employ the standard constant false alarm rate method to evaluate the detection probability performance for millimeter-level deformations under various parameter configurations, such as the number of receiving antennas, RIS elements, and observation time instants. This analysis highlights the capability of the cellular system in detecting slow and subtle changes in structures.
In this expanded simulation scenario, the receiver position is set to $\mathbf{b}_{\rm R} = [0,120,0]^{\rm T}$ m and the RIS position to $\mathbf{r} = [-200,50,100]^{\rm T}$ m. Here, we do not consider cooperation among multiple receivers. The number of subcarriers $N_{\rm c}$ is set to 1000, and the SNR is maintained at 20 dB.
As illustrated in Fig. \ref{fig7}, achieving a 100\% detection probability with a 
$16 \times 16$ RIS requires increasing the number of receiving antennas to 
$128\times128$. 
In accordance with the 5G NR standard, a 20 ms radio frame provides 2240 OFDM symbols available for measurement.
Given the assumption that no significant structural changes occur within this 20 ms interval, the simulation demonstrates that when $T=2240$, a 100\% detection probability can be achieved using a $16 \times 16$ RIS with $8 \times 8$ receiving antennas.
This result suggests that achieving high detection probability for millimeter-level deformations with a limited number of receiving antennas can be accomplished by either increasing the number of RIS elements or extending the observation time.

In addition, we evaluate the detection probability performance for various levels of building deformations under different parameter configurations (varying SNR and number of RIS elements) to showcase the capability of the cellular system in identifying structural states. As depicted in Fig. \ref{fig55}, a 100\% detection probability is achieved with a $4 \times 4$ RIS at an SNR of $-20$ dB for significant building deformations (meter level). However, to attain a high detection probability (96\%) for subtle deformations (millimeter level), an increased number of RIS elements ($64 \times 64$) and a higher received SNR (20 dB) are required.

\begin{figure*}
	\centering
	\includegraphics[width=1\textwidth]{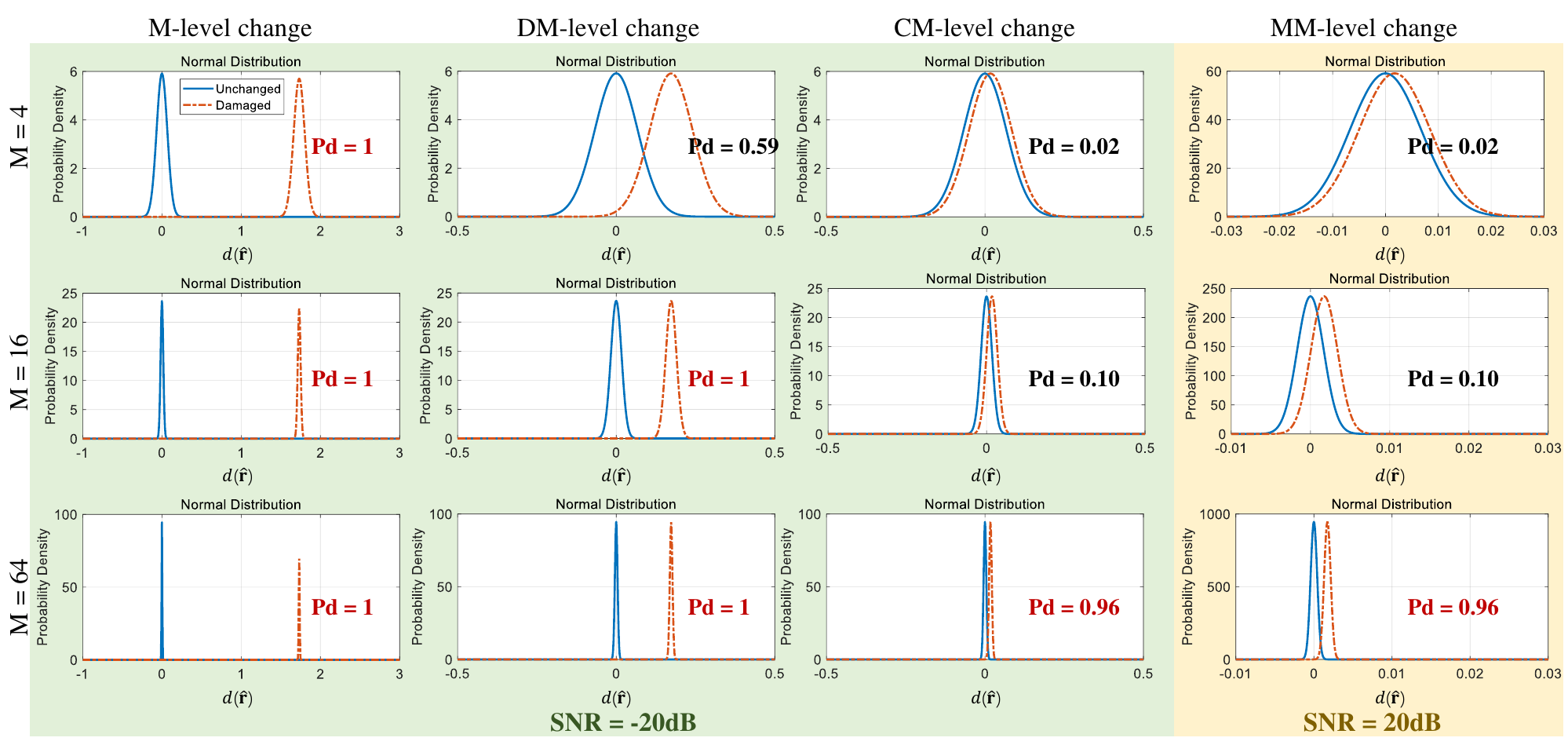}\caption{Detection probability performance across various building deformation levels under different parameter configurations, including varying SNR and number of RIS elements.}
	\label{fig55}
\end{figure*}

\section{Conclusions}\label{sec-conclusion}
This study proposed a framework to investigate the potential of cellular systems in SHM applications.
Specifically, we introduced RISs as reference points for monitoring building structures to suppress clusters and interference. Then, based on the proposed theory of the information ellipsoid, we analyzed spatiotemporal cooperation in 3D cellular networks to meet the stringent sensing accuracy requirements in SHM.
Through theoretical and numerical analyses, we demonstrated that the optimal RIS phase can be configured based on the initial installation, and slight RIS position and orientation variations will not significantly degrade the PEB. Additionally, cellular systems can detect millimeter-level deformations, even with self-positioning errors in cooperative nodes, by increasing observation time, the number of nodes, optimizing their deployment and RIS phases, and enhancing system bandwidth or antennas.
Future research should focus on building rotation estimation, classification of various deformation types, multi-RIS cooperation, multi-modal fusion, large-aperture RIS and near-field effects, as well as experimental validation.



\appendix
	\subsection{Geometric Relationship}\label{A0}
	According to the system model, the RIS position is given by $\mathbf{r} = [x_{\rm r},y_{\rm r},z_{\rm r}]^{\rm T}$, and the receiver position is $\mathbf{b}_{\rm R}=[x_{\rm R},x_{\rm R},x_{\rm R}]^{\rm T}$. Their geometric relationship is as follows: 
	\begin{subequations}
		\begin{align}
		x_{\rm r} - x_{\rm R} &= c(\tau_{\rm r}-\tau_{\rm tr})\cos \phi_{\rm r}^{\rm el} \cos\phi_{\rm r}^{\rm az}, \label{GR0}\\ 
		y_{\rm r} - y_{\rm R} &= c(\tau_{\rm r}-\tau_{\rm tr})\cos \phi_{\rm r}^{\rm el} \sin\phi_{\rm r}^{\rm az},\label{GR1}\\
		z_{\rm r} - z_{\rm R} &= c(\tau_{\rm r}-\tau_{\rm tr})\sin \phi_{\rm r}^{\rm el}\label{GR2}.
		\end{align}
	\end{subequations}
	Given the RIS orientation $\boldsymbol{\alpha}_{\rm r}=[ \alpha_{\rm r}^{\rm x},\alpha_{\rm r}^{\rm z}]^{\rm T}$, we define a local coordinate system where the RIS serves as the origin and the RIS array plane aligns with the $y'$-$z'$ plane. In this coordinate system, the receiver position is represented as $\mathbf{b}'_{\rm R}=[x'_{\rm R},y'_{\rm R},z'_{\rm R}]^{\rm T}$, which satisfies
	\begin{equation}\label{GR3}
    \mathbf{b}'_{\rm R}= -\mathbf{R}^{-1}( \alpha_{\rm r}^{\rm x},\alpha_{\rm r}^{\rm z})(\mathbf{r}-\mathbf{b}_{\rm R}),
	\end{equation}
	where the rotation matrix is given by
	\begin{equation}\label{GR4}
    \mathbf{R}( \alpha_{\rm r}^{\rm x},\alpha_{\rm r}^{\rm z})\!\!=\!\!\left[\!\!\begin{array}{ccc}
    \cos\alpha_{\rm r}^{\rm z} & -\sin\alpha_{\rm r}^{\rm z}\cos \alpha_{\rm r}^{\rm x} & -\sin\alpha_{\rm r}^{\rm z} \sin\alpha_{\rm r}^{\rm x} \\
    \sin\alpha_{\rm r}^{\rm z} & \cos\alpha_{\rm r}^{\rm z}\cos\alpha_{\rm r}^{\rm x} & \cos\alpha_{\rm r}^{\rm z}\sin\alpha_{\rm r}^{\rm x}\\
    0 & -\sin\alpha_{\rm r}^{\rm x} & \cos\alpha_{\rm r}^{\rm x}
    \end{array}\!\!\right]\!\!,
	\end{equation}
	with $\mathbf{R}^{-1}( \alpha_{\rm r}^{\rm x},\alpha_{\rm r}^{\rm z}) = \mathbf{R}^{\rm T}( \alpha_{\rm r}^{\rm x},\alpha_{\rm r}^{\rm z}) $.
	Furthermore, we have
	\begin{subequations}\label{GR5}
		\begin{align}
		x'_{\rm R} &= c(\tau_{\rm r}-\tau_{\rm tr})\cos \tilde{\phi}_{\rm r}^{\rm el} \cos\tilde{\phi}_{\rm r}^{\rm az}, \\ 
		y'_{\rm R} &= c(\tau_{\rm r}-\tau_{\rm tr})\cos \tilde{\phi}_{\rm r}^{\rm el} \sin\tilde{\phi}_{\rm r}^{\rm az},\\
		z'_{\rm R} &= c(\tau_{\rm r}-\tau_{\rm tr})\sin \tilde{\phi}_{\rm r}^{\rm el}.
		\end{align}
	\end{subequations}	
    By combining \eqref{GR3} and \eqref{GR5}, we establish the relationship between the AOD measurement at the RIS, the RIS state, and the receiver position. A similar geometric relationship holds between the RIS and the transmitter.

	\subsection{Information Ellipsoid}\label{B}
	Take the partial derivative with respect to $x_{\rm r}$, $y_{\rm r}$, $z_{\rm r}$ on both sides of \eqref{GR0}, \eqref{GR1}, \eqref{GR2}, respectively, we have
	\begin{subequations}\label{pd}
		\begin{align}
		\!	\frac{\partial \tau_{\rm r} }{\partial \mathbf{r}} \!&\!=\! \frac{1}{c} \left[ \cos\phi_{\rm r}^{\rm az}\cos \phi_{\rm r}^{\rm el}, \sin\phi_{\rm r}^{\rm az} \cos \phi_{\rm r}^{\rm el},  \sin \phi_{\rm r}^{\rm el} \right]^{\rm T},\\
		\!	\frac{\partial \phi_{\rm r}^{\rm az} }{\partial \mathbf{r}} \!&\!=\! \frac{1}{c(\tau_{\rm r}-\tau_{\rm tr})\cos \phi_{\rm r}^{\rm el}}  \left[-\sin \phi_{\rm r}^{\rm az},\cos\phi_{\rm r}^{\rm az},0 \right]^{\rm T}, \label{paz}\\
		\!	\frac{\partial \phi_{\rm r}^{\rm el} }{\partial \mathbf{r}} \!&\!=\! \frac{1}{c(\tau_{\rm r}\!-\!\tau_{\rm tr})}\!\! \left[ - \cos\!\phi_{\rm r}^{\rm az}\!\sin\! \phi_{\rm r}^{\rm el}, -\sin\!\phi_{\rm r}^{\rm az}\!\sin\! \phi_{\rm r}^{\rm el},\cos\! \phi_{\rm r}^{\rm el}\right]\!^{\rm T}\label{pel}\!.
		\end{align}
	\end{subequations}	    
	Therefore, we have 
	\begin{equation}
		\mathbf{J}_{\rm A}=\Big[\frac{\partial \tau_{\rm r} }{\partial \mathbf{r}} ,\frac{\partial \phi_{\rm r}^{\rm az} }{\partial \mathbf{r}}, \frac{\partial \phi_{\rm r}^{\rm el} }{\partial \mathbf{r}}\Big]. 
	\end{equation}

Defining $\mathbf{u}_{\tau_{\rm r}}$, $\mathbf{u}_{\phi_{\rm r}^{\rm az}}$, and $\mathbf{u}_{\phi_{\rm r}^{\rm el}}$ in accordance with \eqref{unitv}, 
then, we have the unitary matrix $\mathbf{U} = [\mathbf{u}_{\tau_{\rm r}}, \mathbf{u}_{\phi_{\rm r}^{\rm az}}, \mathbf{u}_{\phi_{\rm r}^{\rm el}}]$.
Based on \eqref{D1FIMA} and \eqref{D1FIM4}, we derive 
\begin{align}\label{expandFIM}
\mathbf{F}(\mathbf{r}) &\!=\!\mathbf{U}\!\! \left[\!\begin{array}{ccc}
	\!\!	\frac{F_{ \tau_{\rm r} \tau_{\rm r}}}{c^2} & \!\!\!\!0 & \!\!\!\!0 \\
		0 & \!\!\!\! \frac{F_{  \phi_{\rm r}^{\rm az} \phi_{\rm r}^{\rm az}} }{c^2(\tau_{\rm r}-\tau_{\rm tr})^2\cos^2 \phi_{\rm r}^{\rm el}}  & \!\!\!\!\frac{F_{  \phi_{\rm r}^{\rm el} \phi_{\rm r}^{\rm az}}  }{c^2(\tau_{\rm r}-\tau_{\rm tr})^2\cos\phi_{\rm r}^{\rm el}} \\
		0 &\!\!\!\!\frac{F_{  \phi_{\rm r}^{\rm az} \phi_{\rm r}^{\rm el}}  }{c^2(\tau_{\rm r}-\tau_{\rm tr})^2\cos\phi_{\rm r}^{\rm el}}&\!\!\!\!\frac{F_{  \phi_{\rm r}^{\rm el} \phi_{\rm r}^{\rm el}} }{c^2(\tau_{\rm r}-\tau_{\rm tr})^2}
	\end{array}\right] \!\! \mathbf{U}^{\rm H}.
\end{align}
From the EFIM analysis \cite{EFIM}, \cite{ELIP}, we obtain \eqref{expandEFIM}. 
\begin{figure*}
	\vspace{-0.2cm}
	\begin{align}\label{expandEFIM}
		\mathbf{F}_{\rm e}(\mathbf{r}) &=\mathbf{U}\left[\begin{array}{ccc}
			\!\!\frac{F_{ \tau_{\rm r} \tau_{\rm r}}}{c^2} & 0 & 0 \\
			0 & \!\! \frac{F_{  \phi_{\rm r}^{\rm az} \phi_{\rm r}^{\rm az}} }{c^2(\tau_{\rm r}-\tau_{\rm tr})^2\cos^2 \phi_{\rm r}^{\rm el}} - \frac{F_{  \phi_{\rm r}^{\rm el} \phi_{\rm r}^{\rm az}}F_{  \phi_{\rm r}^{\rm az} \phi_{\rm r}^{\rm el}} }{c^2(\tau_{\rm r}-\tau_{\rm tr})^2\cos^2 \phi_{\rm r}^{\rm el}F_{  \phi_{\rm r}^{\rm el} \phi_{\rm r}^{\rm el}}} & 0 \\
			0 & 0 & \!\! \frac{F_{  \phi_{\rm r}^{\rm el} \phi_{\rm r}^{\rm el}} }{c^2(\tau_{\rm r}-\tau_{\rm tr})^2} - \frac{F_{  \phi_{\rm r}^{\rm az} \phi_{\rm r}^{\rm el}} F_{  \phi_{\rm r}^{\rm el} \phi_{\rm r}^{\rm az}}}{c^2(\tau_{\rm r}-\tau_{\rm tr})^2F_{  \phi_{\rm r}^{\rm az} \phi_{\rm r}^{\rm az}}}
		\end{array}\!\!\right] \mathbf{U}^{\rm H}.
	\end{align}
	\vspace{-0.2cm}
\end{figure*}
According to the definition of the EFIM \cite{EFIM}, \cite{ELIP}, 
we have
\begin{equation}
	\tr(\mathbf{F}^{-1}(\mathbf{r})) = \tr(\mathbf{F}^{-1}_{\rm e}(\mathbf{r})).
\end{equation}
In the following, we demonstrate that the expressions given in \eqref{expandFIM} and \eqref{expandEFIM} satisfy the above definition.	
Let 
\begin{align}\label{A1}
	a &= \frac{F_{ \tau_{\rm r} \tau_{\rm r}}}{c^2} \\
	b &= \frac{F_{  \phi_{\rm r}^{\rm az} \phi_{\rm r}^{\rm az}} }{c^2(\tau_{\rm r}-\tau_{\rm tr})^2\cos^2 \phi_{\rm r}^{\rm el}}\\ 
	c &= \frac{F_{  \phi_{\rm r}^{\rm el} \phi_{\rm r}^{\rm el}} }{c^2(\tau_{\rm r}-\tau_{\rm tr})^2}\\
	d &= \frac{F_{  \phi_{\rm r}^{\rm el} \phi_{\rm r}^{\rm az}}  }{c^2(\tau_{\rm r}-\tau_{\rm tr})^2\cos\phi_{\rm r}^{\rm el}}. 
\end{align}
Then, we define 
\begin{align}\label{A2}
	\mathbf{A} &=\left[\!\begin{array}{ccc}
		a & 0 & 0 \\
		0 & b & d \\
		0 & d & c
	\end{array}\right], 
\end{align}	
and
\begin{align}\label{B1}
	\mathbf{B} &= \left[\begin{array}{ccc}
		a & 0 & 0 \\
		0 & b-\dfrac{d^2}{c} & 0 \\
		0 & 0 &  c-\dfrac{d^2}{b}
	\end{array}\!\!\right].
\end{align}	
The inverse can be easily derived 
\begin{align}\label{invA}
	\mathbf{A}^{-1} &=\left[\!\begin{array}{ccc}
		a^{-1} & 0 & 0 \\
		0 & \dfrac{c}{bc-d^2} & \dfrac{-d}{bc-d^2} \\
		0 & \dfrac{-d}{bc-d^2} & \dfrac{b}{bc-d^2}
	\end{array}\right], 
\end{align}	
and
\begin{align}\label{invB}
	\mathbf{B}^{-1} &= \left[\begin{array}{ccc}
		a^{-1} & 0 & 0 \\
		0 & \dfrac{c}{bc-d^2} & 0 \\
		0 & 0 &  \dfrac{b}{bc-d^2}
	\end{array}\!\!\right].
\end{align}				
Utilizing the unitary property of the matrix, i.e., $\mathbf{U}^{\rm H} \mathbf{U} = \mathbf{I}$, we have
\begin{equation}
	\tr(\mathbf{F}^{-1}(\mathbf{r})) = \tr(\mathbf{U}\mathbf{A}^{-1}\mathbf{U}^{\rm H})=\tr(\mathbf{A}^{-1}),
\end{equation}
and
\begin{equation}
	\tr(\mathbf{F}^{-1}_{\rm e}(\mathbf{r})) = \tr(\mathbf{U}\mathbf{B}^{-1}\mathbf{U}^{\rm H})=\tr(\mathbf{B}^{-1}).
\end{equation}
Acoording to \eqref{invA} and \eqref{invB}, we have $\tr(\mathbf{A}^{-1})=\tr(\mathbf{B}^{-1})$, thus, the expression of $\mathbf{F}_{\rm e}(\mathbf{r})$ provided in \eqref{expandEFIM} holds true.

Defining
\begin{subequations}\label{define}
	\begin{align}
		\lambda_{\tau_{\rm r}} &= \frac{F_{\tau_{\rm r}\tau_{\rm r}}}{c^2},  \\
		\lambda_{\phi_{\rm r}^{\rm az}}  &= \frac{F_{\phi_{\rm r}^{\rm az}\phi_{\rm r}^{\rm az}}}{c^2(\tau_{\rm r}-\tau_{\rm tr})^2\cos^2\phi_{\rm r}^{\rm el}}, \\
		\lambda_{\phi_{\rm r}^{\rm el}} &= \frac{F_{\phi_{\rm r}^{\rm el}\phi_{\rm r}^{\rm el}}}{c^2(\tau_{\rm r}-\tau_{\rm tr})^2}, \\
		\chi_{\phi_{\rm r}^{\rm az}\phi_{\rm r}^{\rm el}} &= \frac{F_{  \phi_{\rm r}^{\rm el} \phi_{\rm r}^{\rm az}}F_{  \phi_{\rm r}^{\rm az} \phi_{\rm r}^{\rm el}} }{c^2(\tau_{\rm r}-\tau_{\rm tr})^2\cos^2 \phi_{\rm r}^{\rm el}F_{  \phi_{\rm r}^{\rm el} \phi_{\rm r}^{\rm el}}}, \\
		\chi_{\phi_{\rm r}^{\rm el}\phi_{\rm r}^{\rm az}} &=\frac{F_{  \phi_{\rm r}^{\rm az} \phi_{\rm r}^{\rm el}} F_{  \phi_{\rm r}^{\rm el} \phi_{\rm r}^{\rm az}}}{c^2(\tau_{\rm r}-\tau_{\rm tr})^2F_{  \phi_{\rm r}^{\rm az} \phi_{\rm r}^{\rm az}}},
	\end{align}
\end{subequations}
we have
\begin{align}
	\mathbf{F}_{\rm e}(\mathbf{r})& = \lambda_{\tau_{\rm r}} \mathbf{u}_{\tau_{\rm r}}\mathbf{u}_{\tau_{\rm r}}^{\rm H} + (\lambda_{\phi_{\rm r}^{\rm az}} - \chi_{\phi_{\rm r}^{\rm az}\phi_{\rm r}^{\rm el}}) \mathbf{u}_{\phi_{\rm r}^{\rm az}}\mathbf{u}_{\phi_{\rm r}^{\rm az}}^{\rm H} \nonumber \\
	&~~+ (\lambda_{\phi_{\rm r}^{\rm el}}-\chi_{\phi_{\rm r}^{\rm el}\phi_{\rm r}^{\rm az}} )\mathbf{u}_{\phi_{\rm r}^{\rm el}}\mathbf{u}_{\phi_{\rm r}^{\rm el}}^{\rm H}.
\end{align}
Finally, the \eqref{THE1} is proven.

\subsection{Proof of Non-negativity}\label{A3}
We take $\lambda_{\phi_{\rm r}^{\rm az}}-\chi_{\phi_{\rm r}^{\rm az}\phi_{\rm r}^{\rm el}}$ as an example to prove its non-negativity. 
According to \eqref{fim-az2}, \eqref{fim-el2}, and \eqref{fim-azel2}, we can calculate that
\begin{align}
\lambda_{\phi_{\rm r}^{\rm az}}-\chi_{\phi_{\rm r}^{\rm az}\phi_{\rm r}^{\rm el}} & = \kappa_1^2 \big(\mu_{\phi_{\rm r}^{\rm az}\phi_{\rm r}^{\rm az}}\mu_{\phi_{\rm r}^{\rm el}\phi_{\rm r}^{\rm el}}-\mu_{\phi_{\rm r}^{\rm az}\phi_{\rm r}^{\rm el}}\mu_{\phi_{\rm r}^{\rm el}\phi_{\rm r}^{\rm az}}\big)\nonumber\\
& = \kappa_2^2 \cos^2 \!\phi_{\rm r}^{\rm el}(N_{\rm R}^{\rm el}\!-\!1)(7N_{\rm R}\!+\!N_{\rm R}^{\rm az}\!+\!N_{\rm R}^{\rm el}\!-\!5),
\end{align}
where $\kappa_1$ and $\kappa_2$ are real coefficients. Since $N_{\rm R}^{\rm az},N_{\rm R}^{\rm el} \ge 1$, we have $\lambda_{\phi_{\rm r}^{\rm az}}-\chi_{\phi_{\rm r}^{\rm az}\phi_{\rm r}^{\rm el}} \ge 0$. 

\subsection{Proof of Theorem 5}\label{A2}
We derive the EFIM from \eqref{FIM9} as
\begin{equation} \label{EFIM14} 
		\tilde{\mathbf{F}}_{\rm e }(\mathbf{r})\! =\! \mathbf{J} \mathbf{F}(\boldsymbol{\eta})  \mathbf{J}^{\rm H} \!-\! \mathbf{J} \mathbf{F}(\boldsymbol{\eta}) \mathbf{J}_{\rm R}^{\rm  H\!}\! \big(\! \mathbf{J}_{\rm \!R\!} \mathbf{F}(\boldsymbol{\eta}) \mathbf{J}_{\rm R}^{\rm H}+\mathbf{Q}_{\rm R}^{-1}\! \big)^{\!-1}\! \mathbf{J}_{\rm\! R\!} \mathbf{F}(\boldsymbol{\eta}) \mathbf{J}^{\rm H}.		
\end{equation}
Let 
\begin{align}\label{FIM15}
	\tilde{\mathbf{F}}_{\rm A}  =  \diag(\lambda_{\tau_{\rm r}},\lambda_{\phi_{\rm r}^{\rm az}} - \chi_{\phi_{\rm r}^{\rm az}\phi_{\rm r}^{\rm el}},\lambda_{\phi_{\rm r}^{\rm el}}-\chi_{\phi_{\rm r}^{\rm el}\phi_{\rm r}^{\rm az}}),
\end{align}
thus, we have 
\begin{align} \label{EFIM17} 
	\tilde{\mathbf{F}}_{\rm e }(\mathbf{r}) =  \mathbf{U} \tilde{\mathbf{F}}_{\rm A}  \mathbf{U}^{\rm H} -\mathbf{U} \tilde{\mathbf{F}}_{\rm A} \left( \tilde{\mathbf{F}}_{\rm A}+  \tilde{\mathbf{Q}}^{-1}_{\rm R} \right)^{-1}  \tilde{\mathbf{F}}_{\rm A}\mathbf{U}^{\rm H}.
\end{align}
Substituting \eqref{tildeQ} and \eqref{FIM15} into \eqref{EFIM17}, we obtain 
\begin{align} \label{EFIM5} 
	\tilde{\mathbf{F}}_{\rm e }(\mathbf{r}) \! = \!\! \mathbf{U} \!\! \left[\begin{array}{ccc}
		\!\!\!\frac{\lambda_{\tau_{\rm r}}\epsilon^{-1}_{\tau}}{\lambda_{\tau_{\rm r}}+\epsilon^{-1}_{\tau}}  & 0                                  & 0  \\		
		0                      &   \!\!\!\!\frac{(\lambda_{\phi_{\rm r}^{\rm az}}-\chi_{\phi_{\rm r}^{\rm az}\phi_{\rm r}^{\rm el}})\epsilon^{-1}_{\phi^{\rm az}}}{\lambda_{\phi_{\rm r}^{\rm az}}-\chi_{\phi_{\rm r}^{\rm az}\phi_{\rm r}^{\rm el}}+\epsilon^{-1}_{\phi^{\rm az}}}    & 0  \\
		0                      & 0                                  &   \!\!\!\! \frac{(\lambda_{\phi_{\rm r}^{\rm el}}-\chi_{\phi_{\rm r}^{\rm el}\phi_{\rm r}^{\rm az}})\epsilon^{-1}_{\phi^{\rm el}}}{\lambda_{\phi_{\rm r}^{\rm el}}-\chi_{\phi_{\rm r}^{\rm el}\phi_{\rm r}^{\rm az}}+\epsilon^{-1}_{\phi^{\rm el}}}\!\!\!  \\
	\end{array}\right]\!\!\!\mathbf{U}^{\rm H}\!\!.
\end{align}	
Finally, the \eqref{EFIM55} is proven.

{\renewcommand{\baselinestretch}{1}
	\footnotesize
	\bibliographystyle{IEEEtran}
	\bibliography{IEEEabrv,Reference}}

\begin{thebibliography}{10}
\providecommand{\url}[1]{#1}
\csname url@samestyle\endcsname
\providecommand{\newblock}{\relax}
\providecommand{\bibinfo}[2]{#2}
\providecommand{\BIBentrySTDinterwordspacing}{\spaceskip=0pt\relax}
\providecommand{\BIBentryALTinterwordstretchfactor}{4}
\providecommand{\BIBentryALTinterwordspacing}{\spaceskip=\fontdimen2\font plus
\BIBentryALTinterwordstretchfactor\fontdimen3\font minus
  \fontdimen4\font\relax}
\providecommand{\BIBforeignlanguage}[2]{{%
\expandafter\ifx\csname l@#1\endcsname\relax
\typeout{** WARNING: IEEEtran.bst: No hyphenation pattern has been}%
\typeout{** loaded for the language `#1'. Using the pattern for}%
\typeout{** the default language instead.}%
\else
\language=\csname l@#1\endcsname
\fi
#2}}
\providecommand{\BIBdecl}{\relax}
\BIBdecl

\bibitem{wocc24}
J.~Yang, C.-K. Wen, and S.~Jin, ``Application of integrated sensing and
  communication in structural health monitoring,'' in \emph{Proc. 2024 33rd
  Wireless Opt. Commun. Conf. (WOCC)}, 2024, pp. 128--133.

\bibitem{SHM}
A.~B. Noel \emph{et~al.}, ``Structural health monitoring using wireless sensor
  networks: A comprehensive survey,'' \emph{IEEE Commun. Surveys \& Tutorials},
  vol.~19, no.~3, pp. 1403--1423, 2017.

\bibitem{SAR}
A.~Baumann-Ouyang \emph{et~al.}, ``{MIMO-SAR} interferometric measurements for
  structural monitoring: Accuracy and limitations,'' \emph{Remote Sens.},
  vol.~13, no.~21, pp. 4290--4309, 2021.

\bibitem{ZHU2025102614}
Z.~Zhu \emph{et~al.}, ``An overview of multi-modal sensing systems and the
  potential of {ISAC} technologies in {SHM},'' \emph{Phys. Commun.}, vol.~69,
  p. 102614, 2025.

\bibitem{ISAC}
F.~Liu \emph{et~al.}, ``Integrated sensing and communications: Toward
  dual-functional wireless networks for 6{G} and beyond,'' \emph{{IEEE} J. Sel.
  Areas Commun.}, vol.~40, no.~6, pp. 1728--1767, 2022.

\bibitem{TenLu}
S.~Lu \emph{et~al.}, ``Integrated sensing and communications: Recent advances
  and ten open challenges,'' \emph{IEEE Internet Things J.}, vol.~11, no.~11,
  pp. 19\,094--19\,120, 2024.

\bibitem{6Ghenk}
H.~Wymeersch \emph{et~al.}, ``{6G} positioning and sensing through the lens of
  sustainability, inclusiveness, and trustworthiness,'' \emph{IEEE Wireless
  Commun.}, vol.~32, no.~1, pp. 68--75, 2025.

\bibitem{3GPP}
H.~Li \emph{et~al.}, ``Integrated sensing and communication: {3GPP}
  standardization progress,'' in \emph{proc. 21st Int. Symp. Model. Optim.
  Mobile, Ad Hoc, Wireless Netw. (WiOpt)}, 2023, pp. 1--7.

\bibitem{3GPP22837_j40}
3GPP, ``Feasibility study on integrated sensing and communication,'' {Technical
  Report} 3GPP TR 22.837 V19.4.0 (2024-06), {2024}.

\bibitem{DISAC}
E.~C. Strinati \emph{et~al.}, ``Distributed intelligent integrated sensing and
  communications: The {6G-DISAC} approach,'' in \emph{Proc. 2024 Joint Eur.
  Conf. Networks Commun. \& 6G Summit (EuCNC/6G Summit)}, 2024, pp. 392--397.

\bibitem{CISAC}
K.~Meng \emph{et~al.}, ``Cooperative {ISAC} networks: Performance analysis,
  scaling laws and optimization,'' \emph{IEEE Trans. Wireless Commun.}, pp.
  1--1, 2024.

\bibitem{COWei}
Z.~Wei \emph{et~al.}, ``Integrated sensing and communication enabled multiple
  base stations cooperative sensing towards {6G},'' \emph{IEEE Network},
  vol.~38, no.~4, pp. 207--215, 2024.

\bibitem{DCOWei}
------, ``Deep cooperation in {ISAC} system: Resource, node and infrastructure
  perspectives,'' \emph{IEEE Internet Things Mag.}, vol.~7, no.~6, pp.
  118--125, 2024.

\bibitem{COMoe}
B.~C. Tedeschini \emph{et~al.}, ``Real-time bayesian neural networks for {6G}
  cooperative positioning and tracking,'' \emph{IEEE J. Sel. Areas Commun.},
  vol.~42, no.~9, pp. 2322--2338, 2024.

\bibitem{SLAM}
J.~Yang, C.-K. Wen, and S.~Jin, ``Hybrid active and passive sensing for {SLAM}
  in wireless communication systems,'' \emph{{IEEE} J. Sel. Areas Commun.},
  vol.~40, no.~7, pp. 2146--2163, 2022.

\bibitem{DISAChenk}
A.~Fascista \emph{et~al.}, ``Joint localization, synchronization and mapping
  via phase-coherent distributed arrays,'' \emph{IEEE J. Sel. Topics Signal
  Process.}, pp. 1--16, 2025.

\bibitem{GNSS}
R.~Shults \emph{et~al.}, ``{GNSS}-assisted low-cost vision-based observation
  system for deformation monitoring,'' \emph{Appl. Sci.}, vol.~13, no.~5, pp.
  2813--2848, 2023.

\bibitem{ISO4553_2022}
ISO, ``{Deformations and displacements of buildings and building elements at
  serviceability limit states},'' {Geneva, Switzerland}, {Technical Report}
  {ISO/TR 4553:2022}, {2022}.

\bibitem{RISMarco}
M.~Di~Renzo \emph{et~al.}, ``Smart radio environments empowered by
  reconfigurable intelligent surfaces: How it works, state of research, and the
  road ahead,'' \emph{IEEE J. Sel. Areas Commun.}, vol.~38, no.~11, pp.
  2450--2525, 2020.

\bibitem{RIS}
W.~Tang \emph{et~al.}, ``Wireless communications with reconfigurable
  intelligent surface: Path loss modeling and experimental measurement,''
  \emph{IEEE Trans. Wireless Commun.}, vol.~20, no.~1, pp. 421--439, 2021.

\bibitem{RISliu}
Y.~Liu \emph{et~al.}, ``Reconfigurable intelligent surfaces: Principles and
  opportunities,'' \emph{IEEE Commun. Surveys \& Tutorials}, vol.~23, no.~3,
  pp. 1546--1577, 2021.

\bibitem{IRSaided}
R.~Li \emph{et~al.}, ``{IRS} aided millimeter-wave sensing and communication:
  Beam scanning, beam splitting, and performance analysis,'' \emph{IEEE Trans.
  Wireless Commun.}, vol.~23, no.~12, pp. 19\,713--19\,727, 2024.

\bibitem{RISaided}
K.~Zhong \emph{et~al.}, ``Joint waveform and beamforming design for {RIS}-aided
  {ISAC} systems,'' \emph{IEEE Signal Process. Lett.}, vol.~30, pp. 165--169,
  2023.

\bibitem{ARIS}
Z.~Zhang \emph{et~al.}, ``Active {RIS} vs. passive {RIS}: Which will prevail in
  {6G}?'' \emph{IEEE Trans. Commun.}, vol.~71, no.~3, pp. 1707--1725, 2023.

\bibitem{HRIS}
A.~T. Dogukan \emph{et~al.}, ``Hybrid reconfigurable intelligent
  surface-enabled enhanced reflection modulation,'' \emph{IEEE Trans. Wireless
  Commun.}, vol.~23, no.~12, pp. 18\,929--18\,939, 2024.

\bibitem{starRIS}
X.~Mu \emph{et~al.}, ``Simultaneously transmitting and reflecting ({STAR})
  {RIS} aided wireless communications,'' \emph{IEEE Trans. Wireless Commun.},
  vol.~21, no.~5, pp. 3083--3098, 2022.

\bibitem{RISLochenk}
R.~Ghazalian \emph{et~al.}, ``Joint 3{D} user and 6{D} hybrid reconfigurable
  intelligent surface localization,'' \emph{IEEE Trans. Veh. Technol.},
  vol.~73, no.~10, pp. 15\,302--15\,317, 2024.

\bibitem{cui2017information}
T.~J. Cui, S.~Liu, and L.~Zhang, ``Information metamaterials and
  metasurfaces,'' \emph{J. Mater. Chem. C}, vol.~5, no.~15, pp. 3644--3668,
  2017.

\bibitem{FIM2}
G.~Kwon \emph{et~al.}, ``Joint communication and localization in millimeter
  wave networks,'' \emph{{IEEE} J. Sel. Topics Signal Process.}, vol.~15,
  no.~6, pp. 1439--1454, 2021.

\bibitem{FIM1}
Z.~Abu-Shaban \emph{et~al.}, ``Error bounds for uplink and downlink {3D}
  localization in {5G} millimeter wave systems,'' \emph{{IEEE} Trans. Wireless
  Commun.}, vol.~17, no.~8, pp. 4939--4954, 2018.

\bibitem{NLHenk}
H.~Wymeersch, J.~Lien, and M.~Z. Win, ``Cooperative localization in wireless
  networks,'' \emph{Proc. IEEE}, vol.~97, no.~2, pp. 427--450, 2009.

\bibitem{EFIM}
Y.~Shen and M.~Z. Win, ``Fundamental limits of wideband localization — {P}art
  {I}: A general framework,'' \emph{IEEE Trans. Inf. Theory}, vol.~56, no.~10,
  pp. 4956--4980, 2010.

\bibitem{NEFIM}
Y.~Shen, H.~Wymeersch, and M.~Z. Win, ``Fundamental limits of wideband
  localization — {P}art {II}: Cooperative networks,'' \emph{IEEE Trans. Inf.
  Theory}, vol.~56, no.~10, pp. 4981--5000, 2010.

\bibitem{ELIP}
M.~Z. Win, Y.~Shen, and W.~Dai, ``A theoretical foundation of network
  localization and navigation,'' \emph{Proc. IEEE}, vol. 106, no.~7, pp.
  1136--1165, 2018.

\bibitem{yang2024intelligent}
H.~Q. Yang \emph{et~al.}, ``Intelligent adaptive metasurface in complex
  wireless environments,'' \emph{arXiv preprint arXiv:2411.08538}, 2024.

\bibitem{Detect}
C.~Cappello, D.~Zonta, and B.~Gli\v{s}i\'{c}, ``Expected utility theory for
  monitoring-based decision-making,'' \emph{Proc. {IEEE}}, vol. 104, no.~8, pp.
  1647--1661, 2016.

\bibitem{NOMP}
B.~Mamandipoor, D.~Ramasamy, and U.~Madhow, ``Newtonized orthogonal matching
  pursuit: Frequency estimation over the continuum,'' \emph{IEEE Trans. Signal
  Process.}, vol.~64, no.~19, pp. 5066--5081, 2016.

\end{thebibliography}

\end{document}